\newcommand{\ket}[1]{\left| #1 \right\rangle}
\newcommand{\bra}[1]{\left\langle #1 \right|}
\newcommand{\eg}{{\textit{e.g.}}, }
\newcommand{\ie}{{\textit{i.e.}}, }
\newtheorem{theorem}{Theorem}
\newtheorem{proposition}{Proposition}
\newtheorem{remark}{Remark}
\begin{document}

\title{Self-catalytic conversion of pure quantum states}

\author{Cristhiano  Duarte}
\email{cristhiano@mat.ufmg.br}
\affiliation{Departamento de Matem\'{a}tica, Instituto de
Ci\^{e}ncias Exatas, Universidade Federal de Juiz de Fora, CEP 36036-330, Juiz de Fora, Minas
Gerais, Brazil.}
\affiliation{Departamento de Matem\'{a}tica, Instituto
de Ci\^{e}ncias Exatas, Universidade Federal de Minas Gerais, CP
702, CEP 30123-970, Belo Horizonte, Minas Gerais, Brazil.}
\author{Raphael C. Drumond}
\email{raphael@mat.ufmg.br}
\affiliation{Departamento de Matem\'{a}tica, Instituto
de Ci\^{e}ncias Exatas, Universidade Federal de Minas Gerais, CP
702, CEP 30123-970, Belo Horizonte, Minas Gerais, Brazil.}
\author{Marcelo  Terra Cunha}
\email{tcunha@ime.unicamp.br}
\affiliation{Departamento de Matem\'{a}tica, Instituto
de Ci\^{e}ncias Exatas, Universidade Federal de Minas Gerais, CP
702, CEP 30123-970, Belo Horizonte, Minas Gerais, Brazil.}
\affiliation{Departamento de Matem\'{a}tica Aplicada, Instituto
de Matem\'atica, Estat\'istica e Computa\c c\~ao Cient\'ifica, Universidade Estadual de Campinas, Cidade Universit\'aria Zeferino Vaz,
CEP 13083-970, Campinas, S\~ao Paulo, Brazil.}
\date{\today}

\begin{abstract}
Conversion of entangled states under (Stochastic) Local Operations and Classical Communication admits the phenomenon of catalysis.
Here we  explore the possibility of {a copy of the initial state itself} to perform
as a catalyst, which we call a self-catalytic process. 
We show explicit examples of self-catalysis.
Necessary and sufficient conditions for the phenomenon to take
place are discussed.
We numerically estimate how frequent it is and we show that increasing the 
number of copies used as catalyst can increases the probability of 
conversion, but do not make the process deterministic. 
By the end we conjecture that under LOCC the probability of finding a 
self-catalytic 
reaction \emph{do not} increase monotonically with the dimensions whereas 
SLOCC, \emph{does} increase.  
\end{abstract}

\keywords{Catalysis, Convertibility, LOCC, SLOCC, Self-Catalisys.}

\maketitle

\section{Introduction}

The conversion between bipartite or multipartite quantum states through 
local operations is a
central concept of entanglement theory. 
For instance, it is the criterion used to classify
entanglement, namely, two states have equivalent entanglement if they can be converted to each
other~\cite{nielsenchuang}. 
Moreover, it is usual and natural to consider a state more entangled than other when the first can 
{\emph{access}} the former by the allowed transformations \cite{nielsen}.
Two important sets of allowed transformations are the deterministic
{\emph{local operations and classical communication}} (LOCC) and its stochastic 
version (SLOCC), where the transformation only needs to succeed with positive 
probability \cite{Andreas}. 
Although this hierarchisation is relatively simple for bipartite pure 
states~\cite{nielsen, horodeckis}, it is
quite involved if mixed or multipartite states are considered~\cite{multipartite1, multipartite2, 
EisertWilkens}.

The problem of convertibility of bipartite pure states has been solved by Nielsen\cite{nielsen}, using the concept 
of majorisation \cite{nielsen2}. 
However, Jonathan and Plenio discovered a surprising effect \cite{daniel}. 
They have shown the existence of pairs of states
which are not directly inter-convertible but such that their conversion is possible if another (necessarily
entangled) state is attached to them. 
That is, the pair of states
have incomparable entanglement but they may become ordered if an extra system is attached
to the original ones. 
Such extra state that makes a transformation possible, without being consumed, is called a {\emph{catalyst}}. 
More recently, the problem of convertibility has received experimental attention \cite{exp} and has 
also been connected to basic results in thermodynamics \cite{brandao}  and phase transitions 
\cite{helena}.

In this paper we explore the following question: can one of the states of a non
inter-convertible pair be used as a catalyst? 
We answer this question positively, providing explicit examples. 
We also explore how common such processes are for low dimensional systems.
An interesting situation is when a state $\ket{\Psi}$ is not able of self-catalysing a transformation, but a number of copies, $\ket{\Psi}^{\otimes n}$, is. 
We exhibit examples where more than one copy is required and study how augmenting the number of copies
can increase the probability of conversion.  

In section~\ref{catalisys} we review basic notions of state conversion and catalysis under LOCC. 
In section~\ref{selfie} we show explicit examples of self-catalytic processes,
explore, through numerical analysis, how frequent they are under random samples of incomparable
pairs and how it depends on the size of the systems. 
Section~\ref{catslocc} review the notions of probabilistic catalysis under SLOCC,
while the natural questions of self-catalysis under SLOCC are discussed in section~\ref{sefslocc}. 
We close the main text with final remarks
and further problems in section~\ref{concl}. 


\section{Catalisys}\label{catalisys}

We say that a bipartite state $\ket{\alpha}\in H_{A}\otimes H_{B}$ \emph{access} a state
$\ket{\beta}\in H'_{A}\otimes H'_{B}$, if there is some LOCC operation, represented by a completely
positive trace preserving (CPTP) map
$\Lambda$, such that $\Lambda(\ket{\alpha}\bra{\alpha})=\ket{\beta}\bra{\beta}$, where
$H_{A},H'_{A},H_{B}$ and $H'_{B}$ are finite dimensional Hilbert spaces and $\Lambda$ maps density
operators acting on $H_{A}\otimes H_{B}$ to density operators acting on 
$H'_{A}\otimes H'_{B}$. In such
case, we write 
$\ket{\alpha}\rightarrow \ket{\beta}$.
If there is no LOCC operation able to convert $\ket{\alpha}$ to $\ket{\beta}$, we shall write $\ket{\alpha} \nrightarrow \ket{\beta}$.

For example, for a pair of qubits, the Bell state $\ket{\Phi^{+}}=(\ket{00}+\ket{11})/\sqrt{2}$ can
access any two-qubit pure state.
Indeed, naming our qubits $A$ and $B$, and 
writing the target state as $\ket{\beta} = a\ket{00} + b\ket{11}$, we can, 
for example, make $A$ unitarily interact
with an auxiliary qubit $A'$, so that $\ket{0_{A}0_{A'}}\mapsto
a\ket{0_{A}0_{A'}}+b\ket{1_{A}1_{A'}}$ and $\ket{1_{A}0_{A'}}\mapsto
a\ket{0_{A}1_{A'}}+b\ket{1_{A}0_{A'}}$. Then,
$[(\ket{0_{A}0_{B}}+\ket{1_{A}1_{B}})/\sqrt{2}]\ket{0_{A'}}\mapsto
(1/\sqrt{2})(a\ket{0_{A}0_{B}}+b\ket{1_{A}1_{B}})\ket{0_{A'}}+(1/\sqrt{2})(a\ket{0_{A}1_{B}}+b\ket{
1_{A}0_{B}} )\ket{ 1_{A'} }$. The lab with qubit $A$ can make a measurement 
on the computational basis of the auxiliary $A'$
and send the result to the lab holding qubit $B$. If the result is $0$, they already share the
desired state, while the result being $1$, a NOT operation,
$\ket{0_{B}}\mapsto\ket{1_{B}},\ket{1_{B}}\mapsto\ket{0_{B}}$, can be applied to qubit $B$ to also
leave the system $AB$ in the desired state.
This result generalises for a pair of qu$d$its in the state $\displaystyle{\ket{\Phi^+_d} = \frac{1}{\sqrt{d}}\sum_{i=0}^{d-1} \ket{ii}}$, which justifies $\ket{\Phi^+_d}$ to be called a maximally entangled state.  

Nielsen, in the seminal paper~\cite{nielsen}, provided a simple necessary and sufficient condition
for determining whether a general bipartite state $\ket{\alpha}$ can access a state
$\ket{\beta}$ in terms of their corresponding Schmidt vectors~\cite{nielsenchuang}:
\begin{theorem}[Nielsen Criterion]
\label{thm-Nielsen}
If
$\vec{\alpha}=(\alpha_{1},...,\alpha_{n})$ and $\vec{\beta}=(\beta_{1},...,\beta_{m})$ are the
(non-increasing) ordered Schmidt
vectors of $\ket{\alpha}$ and $\ket{\beta}$, respectively, we have $\ket{\alpha}\rightarrow
\ket{\beta}$ if, and only if,
\begin{equation}\label{majoracao}
\sum_{l=1}^{k}\alpha_{l}\leq\sum_{l=1}^{k}\beta_{l}
\end{equation} 
for $1\leq k\leq \min\{n,m\}$. 
\end{theorem}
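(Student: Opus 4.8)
The plan is to prove the two implications separately, in each case replacing statements about LOCC protocols by statements about the local reduced operators $\rho_\alpha=\mathrm{tr}_B\ket{\alpha}\bra{\alpha}$ and $\rho_\beta=\mathrm{tr}_B\ket{\beta}\bra{\beta}$, whose ordered eigenvalue vectors are exactly $\vec\alpha$ and $\vec\beta$, and then invoking the classical dictionary between the majorisation relation \eqref{majoracao} and the action of doubly stochastic (mixed-unitary) maps.

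For necessity, assume $\ket{\alpha}\rightarrow\ket{\beta}$. The first step is to bring the LOCC protocol into a canonical one-round form: Alice applies a single generalised measurement $\{A_m\}$, $\sum_m A_m^\dagger A_m=I$, and, upon outcome $m$, Bob applies a correcting unitary $U_m$, so that $(A_m\otimes U_m)\ket{\alpha}=\sqrt{p_m}\,\ket{\beta}$ for every $m$, with $\sum_m p_m=1$. Tracing out $B$ and using invariance of the partial trace under unitaries on $B$ gives $A_m\,\rho_\alpha\,A_m^\dagger=p_m\,\rho_\beta$. Since equality of the positive operators $A_m\rho_\alpha A_m^\dagger$ and $p_m\rho_\beta$ forces $A_m\rho_\alpha^{1/2}=\sqrt{p_m}\,\rho_\beta^{1/2}W_m$ for suitable (partial) isometries $W_m$, substituting $A_m=\sqrt{p_m}\,\rho_\beta^{1/2}W_m\rho_\alpha^{-1/2}$ into the completeness relation and conjugating by $\rho_\alpha^{1/2}$ yields $\rho_\alpha=\sum_m p_m\,W_m^\dagger\rho_\beta W_m$ (all read on the relevant supports). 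Thus $\rho_\alpha$ is a convex mixture of unitary conjugates of $\rho_\beta$; since such a mixed-unitary channel is doubly stochastic, its output spectrum is majorised by its input spectrum, i.e.\ $\vec\alpha\prec\vec\beta$, which is precisely \eqref{majoracao}.

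For sufficiency I would run the construction in reverse. Assume \eqref{majoracao}. Rotating into Schmidt bases by local unitaries, we may take $\ket{\alpha}=\sum_i\sqrt{\alpha_i}\ket{ii}$ with target $\sum_i\sqrt{\beta_i}\ket{ii}$. Use the Hardy--Littlewood--P\'olya/Muirhead lemma that $\vec\alpha\prec\vec\beta$ implies $\vec\alpha$ can be reached from $\vec\beta$ by a finite chain of elementary $T$-transforms, each modifying only two coordinates; reading the chain backwards, the whole conversion decomposes into finitely many steps of the form $\ket{x}\rightarrow\ket{y}$ where the Schmidt vectors $\vec x\prec\vec y$ differ by a single $T$-transform. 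Such a step is realised by LOCC as follows: Alice performs a two-outcome POVM supported on the two-dimensional subspace indexed by the affected coordinates, broadcasts the outcome, and Bob applies an outcome-dependent $2\times2$ unitary; a direct computation shows both branches produce $\ket{y}$ and that the mixing parameter of the $T$-transform is set by the POVM. Composing the steps gives the protocol.

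The step I expect to be the main obstacle is the very first one of the necessity direction: reducing a general, possibly many-round, two-way LOCC protocol to the canonical ``Alice measures, Bob corrects'' form. Intermediate rounds generically produce branch-dependent states, and one must exploit that the final output is one fixed pure state on every branch to merge all the local operations without loss --- this is exactly where the purity and determinism hypotheses are used, and it is the part that genuinely requires care rather than bookkeeping. The remaining ingredients (the operator fact $XX^\dagger=YY^\dagger\Rightarrow X=YW$, the $T$-transform decomposition of majorisation, and the majorisation/doubly-stochastic correspondence) are standard.
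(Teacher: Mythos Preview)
The paper does not prove this theorem at all: it merely states the Nielsen criterion and cites Ref.~\cite{nielsen}, then moves on to examples. So there is no in-paper argument for you to be compared against.

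On its own merits your outline is correct and is exactly the standard proof that appears in Nielsen's original paper and in \cite{nielsenchuang}. The necessity direction is right: from $A_m\rho_\alpha A_m^\dagger=p_m\rho_\beta$ one gets $A_m\rho_\alpha^{1/2}=\sqrt{p_m}\,\rho_\beta^{1/2}W_m$, and substituting back into the completeness relation yields $\rho_\alpha=\sum_m p_m W_m^\dagger\rho_\beta W_m$, hence $\vec\alpha\prec\vec\beta$ by Uhlmann/Birkhoff. The sufficiency direction via the $T$-transform decomposition is also standard, and your directions are kept straight: one builds the LOCC chain $\ket{\alpha}\to\cdots\to\ket{\beta}$ where each link has Schmidt vectors related by a single $T$-transform, and a two-outcome measurement plus a conditional unitary on the relevant two-dimensional subspace realises each link.

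You are also right that the only genuinely delicate ingredient is the reduction of an arbitrary many-round LOCC protocol with a deterministic pure output to the one-round form ``Alice measures, Bob corrects unitarily''. This is the Lo--Popescu lemma; the essential point is that any local operation on Bob that preserves purity branchwise acts on each branch as a unitary (up to normalisation), so Bob's measurements can be simulated by Alice via the Schmidt isomorphism and absorbed into her instrument, collapsing all rounds into one. Finally, a small technical caveat: when the Schmidt ranks of $\ket{\alpha}$ and $\ket{\beta}$ differ the $W_m$ are only partial isometries, so to invoke ``mixed-unitary $\Rightarrow$ majorisation'' cleanly one should embed both reduced operators in a common space by padding with zeros; you gesture at this with ``on the relevant supports'', which is fine for an outline but would need one sentence of care in a full write-up.
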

By an ordered Schmidt vector 
$\vec{\lambda} = (\lambda_{1},\ldots,\lambda_{n})$, we mean
that $\lambda_{1}\geq \lambda_{2}\geq \ldots \geq\lambda_{n}$. 
In this work, we shall always assume that
Schmidt vectors are ordered. When vectors $\vec{\alpha}$ and $\vec{\beta}$
satisfy conditions~\eqref{majoracao} we say that $\vec{\alpha}$ is majorised by $\vec{\beta}$ and write
$\vec{\alpha} \preceq \vec{\beta}$.

Revisiting the example above, we have $(1/2,1/2)$ for the ordered Schmidt vector of $\ket{\Phi^{+}}$
and $(|a|^{2},|b|^{2})$ for $a\ket{00}+b\ket{11}$, assuming $|a|\geq |b|$, it is straightforward to
apply the criterion and verify that $\ket{\Phi^{+}}\rightarrow [a\ket{00}+b\ket{11}]$.

A consequence of the criterion is the existence of pair of states $\ket{\alpha}$ and $\ket{\beta}$
such that $\ket{\alpha}\nrightarrow\ket{\beta}$ and $\ket{\beta}\nrightarrow\ket{\alpha}$. 
Actually, the only case where such an order is total is for two qubits.
For instance, consider a state of two  four-level systems with Schmidt vector $(0.4,0.4,0.1,0.1)$ and a
two-qutrit state with Schmidt vector $(0.5,0.25,0.25)$. 
Indeed, 
\begin{subequations}
\begin{align}
0.4=\alpha_{1}&< \beta_{1}=0.5,\\
0.4+0.4=\alpha_{1}+\alpha_{2}&>\beta_{1}+\beta_{2}=0.5+0.25.
\end{align}
\end{subequations}

In Ref.~\cite{daniel}, the authors surprisingly show that it is possible to circumvent
such non-accessibility between these states by making the parts to share an entangled state $\ket{\kappa}$  such that $\ket{\alpha}\otimes \ket{\kappa}\rightarrow \ket{\beta}\otimes \ket{\kappa}$
(see Figure~\ref{cris1}). Since the state $\ket{\kappa}$ allows for a previously
forbidden
conversion, but at the end of the process it remains unaltered, it is called a \emph{catalyst}. 
{In this sense, we say that $\ket{\alpha}$ \emph{$\vec{\kappa}$-access} $\ket{\beta}$ 
when $\ket{\alpha} \nleftrightarrow \ket{\beta}$, but $\ket{\alpha} \otimes \ket{\kappa} 
\rightarrow \ket{\beta} \otimes \ket{\kappa}$}.
In this specific example, the two-qubit state $\ket{\kappa}$ with Schmidt vector
$\vec{\kappa}=(0.6,0.4)$ is a catalyst.

\begin{figure}[h] 
\includegraphics[scale=0.3]{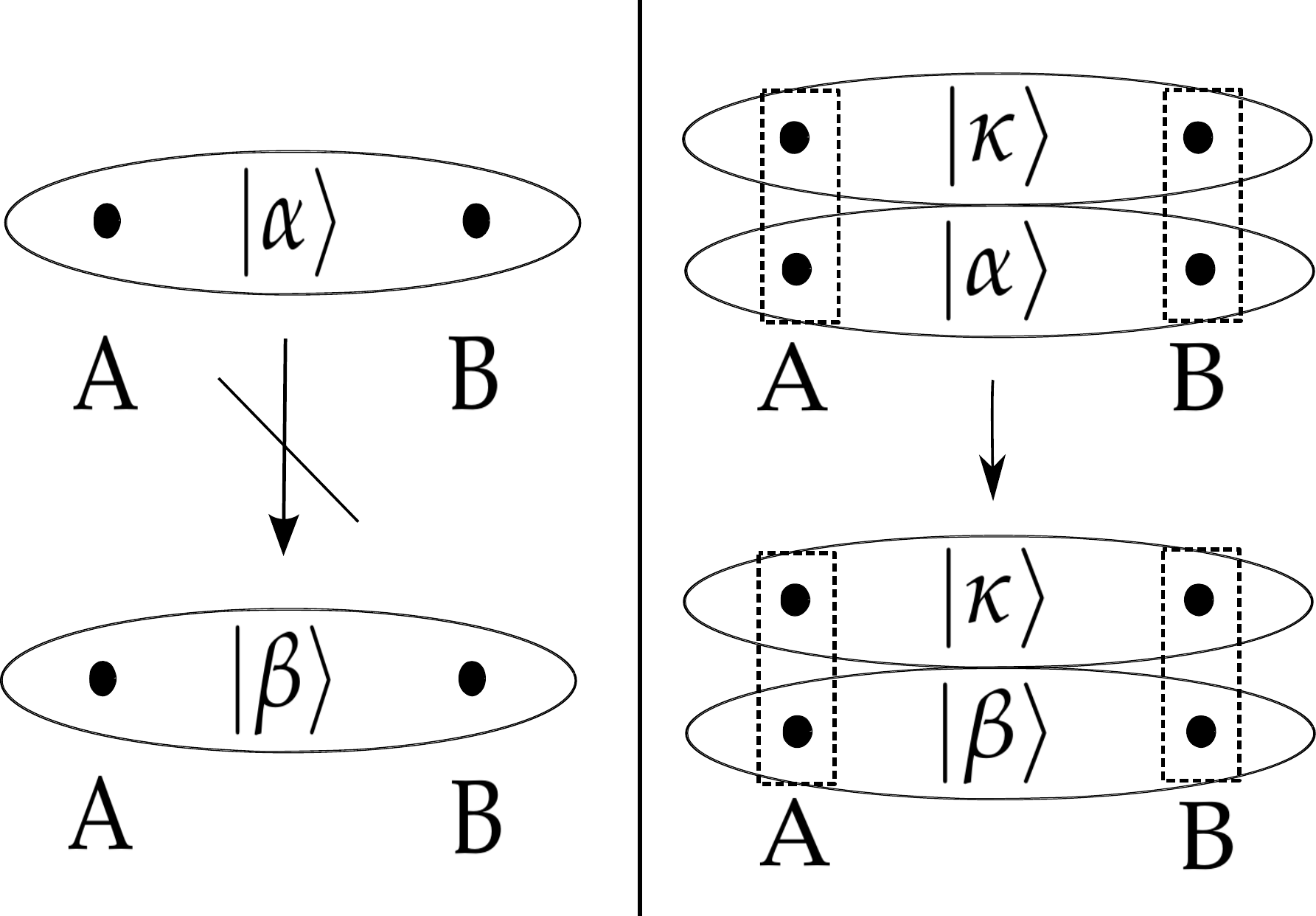}
\caption{Although the conversion $\ket{\alpha}\rightarrow \ket{\beta}$ is not allowed under LOCC,
the state $\ket{\kappa}$ can be used (but not consumed) to make it viable.}\label{cris1}
\end{figure}

Nielsen's criterion ensures that all information about the (possibility of) conversion is 
contained in the Schmidt vectors. 
Therefore, we can explore our knowledge regarding probability 
vectors to provide more examples of catalysts. 
If $\vec{\alpha} = (0.5,0.4,0.05,0.05)$, $\vec{\beta} =(0.7,0.15,0.15)$,  $\vec{\kappa_1}=(0.7,0.3)$, and $\vec{\kappa_2}=(0.75, 0.25)$,
we obtain:
\begin{subequations}
\begin{align}
\vec{\alpha} &\nleftrightarrow \vec{\beta} \\
\vec{\alpha}  \otimes \vec{\kappa_1} &\rightarrow \vec{\beta} \otimes \vec{\kappa_1} \\
\vec{\alpha}  \otimes \vec{\kappa_2} &\rightarrow \vec{\beta} \otimes \vec{\kappa_2}.
\end{align}
\end{subequations}
This is a good example of non-unicity of catalysts.
Indeed, for a given forbidden transition, $\ket{\alpha} \nrightarrow 
\ket{\beta}$, and fixed local dimensions for the catalysts, the set of 
allowed vectors $\vec{\kappa}$ is a polytope~\cite{SFDY05}.

Another interesting geometric fact is that catalysis is possible for every 
bipartite scenario, starting from $4\times 3$, 
\ie for all effective dimensions $m \geq n$, $m \geq 4$, and $n\geq 3$ it 
is possible to choose 
$\vec{\alpha}=(\alpha_{1},...,\alpha_{m})\nleftrightarrow(\beta_{1},...,\beta_{n})=\vec{\beta}$ with
catalyst $\vec{\kappa}$.
The essential step, after the previously described examples, is that given 
$\vec{\alpha}$, $\vec{\beta}$, and $\vec{\kappa}$, we generically can 
increase dimensions by one  and construct  
a forbidden transition  $\vec{\alpha'} \nrightarrow \vec{\beta'}$  with 
the same catalyst $\vec{\kappa}$, by using
$\vec{\alpha'}=(\alpha_{1},...,\alpha_{n}-\epsilon,\epsilon)$ and
$\vec{\beta'}=(\beta_{1},...,\beta_{n}-\epsilon',\epsilon')$
with small enough $\epsilon>0$ and $\epsilon' \geq 0$. 
This allow us to construct examples for all such $m$ and $n$.


\section{Self-Catalysis}\label{selfie}

In this section we address the main question of this paper for the case of LOCC 
convertibility: can a bipartite quantum
state be itself the catalyst of a forbidden conversion? 
To be more precise, is there  a forbidden
conversion $\ket{\alpha}\nrightarrow \ket{\beta}$, such that 
$\ket{\alpha}\otimes
\ket{\alpha}\rightarrow \ket{\beta}\otimes \ket{\alpha}$? 
Refining a little bit more, is it possible that we still have
$\ket{\alpha}\otimes
\ket{\alpha}\nrightarrow \ket{\beta}\otimes \ket{\alpha}$, but a larger 
number of copies of
$\ket{\alpha}$ would do the job, \ie  $\ket{\alpha}\otimes
\ket{\alpha}^{\otimes N}\rightarrow \ket{\beta}\otimes \ket{\alpha}^{\otimes N}$ for some
$N > 1$?

In Table~\ref{Table} we list examples to answer affirmatively these 
questions.
The first is an example of self-catalysis from a two-ququart state to a two-qutrit state.
Then, for the same scenario, there is a list of multi-copy self-catalysis, with the corresponding minimal number of copies.   
At the last row, we come back to single-copy self-catalysis from a two-ququart state to another two-ququart state.

\textit{Remark}. When $\ket{\alpha}$ has 4 non-zero Schmidt 
coefficients and $\ket{\beta}$ has 3 non-zero Schmidt coefficients, the phenomenon of 
self-catalysis can be noted, that is, even in the minimal dimensions to occur catalysis 
(see \cite{daniel}) the self-catalysis can also carry out. 
Geometrically, this mean that even is the smallest dimension (most 
restrictive) case, there are cases where the source state $\vec{\alpha}$ 
belongs to the polytope of catalysts for the reaction $\vec{\alpha} 
\nrightarrow \vec{\beta}$. 
Moreover, it is possible to construct
examples for any scenario through a similar argument presented at the Section~\ref{catalisys}.


\begin{table}
\begin{tabular}{c| c |c}
\hline
$\vec{\alpha}$ & $\vec{\beta}$ & $\sharp$Copies ($N$) \\
\hline
$({0.900},{0.081},{0.010},{0.009})$ & ({0.950},{0.030},{0.020}) & $N=1$ \\
\hline
$({0.900},{0.088},{0.006},{0.006})$ & & $N=2$ \\
$({0.908},{0.080},{0.006},{0.006})$ & & $N=3$\\
$({0.918},{0.070},{0.006},{0.006})$ & 
$({0.950},{0.030},{0.020})$  & $N=4$\\
$({0.925},{0.063},{0.006},{0.006})$ & & $N=5$\\
$({0.928},{0.060},{0.006},{0.006})$  &  & $N=6$ \\
\hline
$({0.900},{0.081},{0.010},{0.009})$ & ({0.950},{0.030},{0.019},{0.001}) & $N=1$ \\
\hline

\end{tabular}
\caption{\label{Table}Number $N$ of copies required to make $\ket{\alpha}^{\otimes N}$ a deterministic
catalyst for the process $\ket{\alpha}\rightarrow \ket{\beta}$.}
\end{table}

\subsection{Stability under small perturbations}\label{stab}

It is important to mention that the phenomenon of self-catalysis, as it happens 
with
catalysis, is generically robust against small perturbations of the state vectors involved. That 
is, suppose that one is aiming to perform a self-catalytic process $\ket{\alpha}\otimes 
\ket{\alpha}\rightarrow \ket{\beta}\otimes \ket{\alpha}$, but it actually implements states
$\ket{\alpha'}$, $\ket{\beta'}$, where $\ket{\alpha'}\approx \ket{\alpha}$ and 
$\ket{\beta'}\approx \ket{\beta}$. 
Our claim is: \emph{generically, if $\ket{\alpha}$ $\alpha$-access 
$\ket{\beta}$, then $\ket{\alpha'}$ $\alpha'$-access $\beta'$}.  
It is easy to see 
that this will be true, depending only on the inequalities implying that
$\ket{\alpha}\nrightarrow \ket{\beta}$, as well as those assuring that 
$\ket{\alpha}\otimes
\ket{\alpha}\rightarrow \ket{\beta}\otimes\ket{\alpha}$,  all be strict (except the last one, which is granted by normalisation). 
Denote by $|\vec{\lambda}|$
some norm (\eg Euclidean) of the vector $\vec{\lambda}$ and $n,m$ the sizes 
of Schmidt vectors
$\vec{\alpha},\vec{\beta}$. 
Now, assuming that we have, for
some $l$, $\sum_{j=1}^{k}\alpha_{j}<\sum_{j=1}^{k}\beta_{j}$ for $1\leq k\leq l$ and
$\sum_{j=1}^{k}\alpha_{j}>\sum_{j=1}^{k}\beta_{j}$ for $l< k\leq n-1$, the same set of inequalities
will hold for the entries of vectors $\vec{\alpha'}$ and $\vec{\beta'}$ if
$|\vec{\alpha}-\vec{\alpha}'|,|\vec{\beta}-\vec{\beta'}|<\epsilon$, as long as
$\epsilon$ is small enough. 
This implies that $\ket{\alpha'}$ and $\ket{\beta'}$ are still incomparable. A
similar reasoning can be applied when the incomparability of vectors
$\vec{\alpha}$ and $\vec{\beta}$ is due to two or more changes of signs in the inequalities. 
In the same way,
$\sum_{j=1}^{k}(\alpha\otimes\alpha)^{\downarrow}_j<\sum_{j=1}^{k}(\beta\otimes\alpha)^{\downarrow}_j$ for $k<nm-1$,
imply 
$\sum_{j=1}^{k}(\alpha'\otimes\alpha')^{\downarrow}_j<\sum_{j=1}^{k}(\beta'\otimes\alpha')^{\downarrow}_j$, 
if $|\vec{\alpha}-\vec{\alpha}'|,|\vec{\beta}-\vec{\beta}'|<\epsilon$,
for small enough $\epsilon$, which proves the claim.
Naturally, the argument includes the well-motivated situation when $\ket{\beta'} = \ket{\beta}$ as a special case.

\subsection{Self-Catalysis under LOCC for random Schmidt vectors}\label{tip}

We have numerically investigated how usual the phenomenon of 
self-catalysis among pairs of 
incomparable bipartite states is. 
Fixing the sizes of $\vec{\alpha}$ and $\vec{\beta}$, we 
randomly sample pairs of such vectors until we 
find incomparable ones. 
The sampling of each 
vector is done by uniformly sorting unitary vectors in $\mathbb{C}^n\otimes \mathbb{C}^n$, \textit{i.e.} sorting according 
to 
the \emph{Haar Measure} in the respective state spaces~\cite{Loyd,Zyc}, and 
then calculating the correspondent Schmidt vector.
After finding a pair of incomparable states, we test 
whether the first of the vectors can be used as a catalyst for the 
conversion. 
The results show that for this method of sampling and for small dimensional systems, the phenomenon is actually \emph{atypical}.  
Moreover, the numerical estimations seem to imply that the phenomenon of 
self-catalysis is atypical in any dimension.     

\begin{figure}[h] 

\includegraphics[scale=0.5]{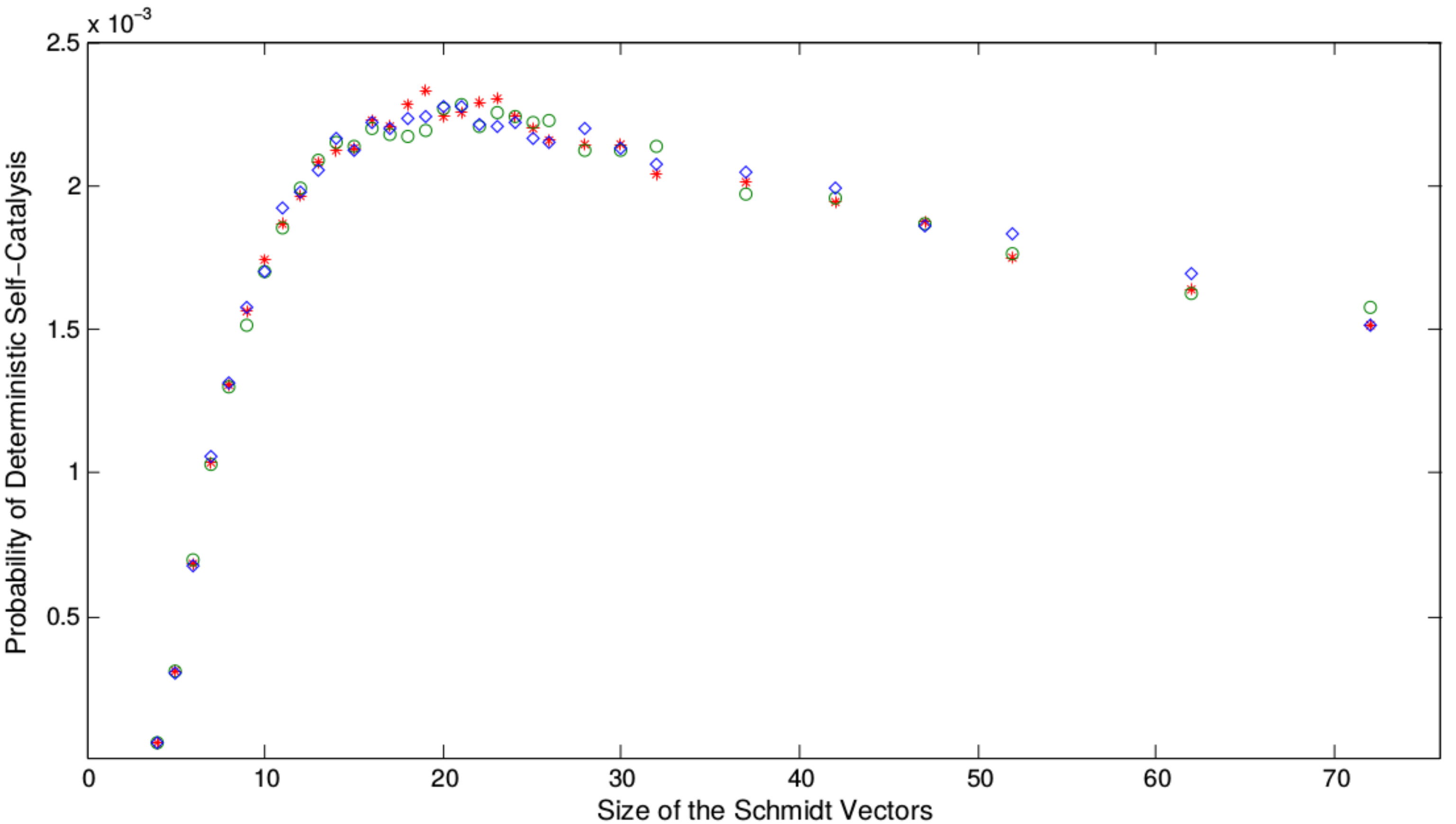}
\caption{Probability of finding a pair of states exhibiting self-catalysis as function of
the dimension of each system, that is 
$\mathbb{P}( \ket{\alpha}\otimes\ket{\alpha} \rightarrow 
\ket{\beta}\otimes \ket{\alpha} \,\, | \,\,  \ket{\alpha} \nleftrightarrow 
\ket{\beta} )$. Each symbol means an average over a distinct 
set of random choices. For each size explored there are three different 
symbols, which indicates a reasonable stability in this sampling 
process.}\label{cris2haar}
\end{figure}

Figure~\ref{cris2haar} shows a numerical estimation for $\mathbb{P}( 
\ket{\alpha}\otimes\ket{\alpha} \rightarrow 
\ket{\beta}\otimes \ket{\alpha} \,\, \arrowvert \,\,  \ket{\alpha} 
\nleftrightarrow 
\ket{\beta} )$, that is, the conditional 
probability of self-catalysis given a pair of incomparable Schmidt vectors, 
 as a function of their sizes. 
Note that this probability increases with the size of the Schmidt vectors,  
until sizes about $20$ and then starts to slowly decrease, but 
in fact for any dimension its order of magnitude 
shows that the phenomenon is present, but rare.

We believe the global maximum 
present in the Figure~\ref{cris2haar} can be explained by the algebraic 
character of the comparison between vectors and the concentration of measure phenomenon. In order to
$\ket{\alpha}\otimes \ket{\alpha}$ access 
$\ket{\alpha}\otimes \ket{\beta}$, all the corresponding inequalities of Eq.~\eqref{majoracao} must be satisfied. Then, in one hand, by
increasing the dimension of the vectors, this 
comparison become harder to be valid (more inequalities have to be obeyed). On the other hand,
it is known that a measure concentration 
phenomenon takes place for increasing dimensionality, in the sense that the Schmidt vectors become
typically closer and closer to the constant vector $(\frac{1}{d},...,\frac{1}{d})$. This phenomenon can be
highlighted by the average normalized entropy of the sorted vectors: if this average is close to $1$, it means that
most vectors are close to the constant vector. This average value is know to be exactly
$\frac{1}{\ln{d}}[\sum_{k=d+1}^{d^{2}}\frac{1}{k}-\frac{d-1}{2d}]$, as conjectured by
Page~\cite{page} and latter
proved by Foong and Kanno~\cite{foong}

If the vectors of the pair are incomparable, that is, some inequalities of Eq. \eqref{majoracao} are not satisfied, but
both are close to the constant vector, it will be easier for the catalyst to make the transition
possible, since the corresponding inequalities for the vectors with the catalyst attached will be
easier to be satisfied.

We can note that the concentration of measure increases very fast at low dimensions (see
Figure~\ref{entropy_haar}), helping the possibility of  ``organising'' the Schmidt 
vectors with the catalyst attached, which justifies
the corresponding increasing in the probability for self-catalysis. 
However, around dimension $20$ the
concentration happens much slower and presumably is not fast enough to overcome the size effect (which makes majorisation more
difficult), so the probability of finding a self-catalytic pair decreases after this point.
\begin{figure} 
\includegraphics[scale=0.5]{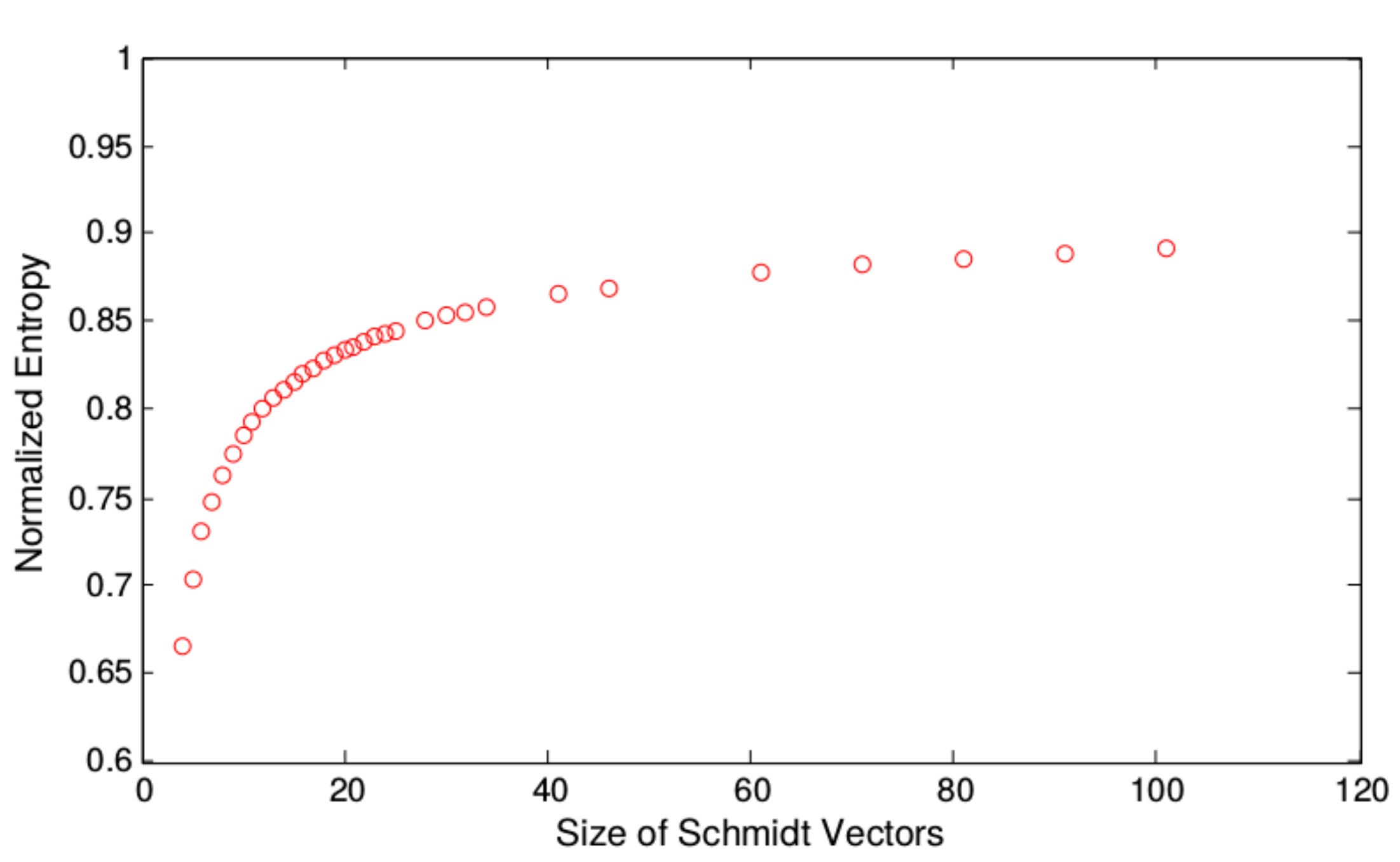}
\caption{Mean normalized entropy for Schmidt vectors 
sorted by Haar measure. Here we have explored the behavior of the entropy 
until effective dimension 100.}\label{entropy_haar}
\end{figure}

Summing up, the probability seems to be a smooth function
of the dimension, reaching its higher value for dimensions of each 
Schmidt vector close to $20$ and, apparently, converging to a  value 
considerably 
smaller than $1$. 

\begin{remark} We have restricted the analysis above to events where 
$\ket{\alpha} 
\nleftrightarrow \ket{\beta}$. 
For large dimensions we have checked that this event 
\emph{is} typical. 
Indeed, in such regime the entries of vectors 
$\vec{\alpha}$ and $\vec{\beta}$, before ordering, are essentially 
concentrated random variables fluctuating around a fixed value. If we look, 
then, to Eq.~\eqref{majoracao} we see that there 
is a good chance for the sign of the inequality to change as we vary the index 
$k$, implying that the states are incomparable. Therefore, we expect that 
$\mathbb{P}(\ket{\alpha} \nleftrightarrow \ket{\beta}) 
\approx 1$, 
for $n \gg 1$.\label{remark4}
\end{remark}


\section{Catalysis under SLOCC}\label{catslocc}

It is possible to generalize the concept of accessibility if we allow for 
non-deterministic
processes. 
In this case, we can look for the probability 
$P_{SLOCC}(\ket{\alpha}\rightarrow
\ket{\beta})$, or $P_{S}(\ket{\alpha}\rightarrow
\ket{\beta})$ for short, of having the state conversion 
$\ket{\alpha}\rightarrow\ket{\beta}$ under the best
local strategy, \ie optimizing $P$ under the conditions defining SLOCC. 
It is interesting to recall 
that the famous result on inconvertibility between W and GHZ states  
refers 
to such conditions on the multipartite scenario~\cite{GHZW}.
%
%

As shown by Vidal~\cite{vidal}, for the bipartite case, the Schmidt vectors 
also encode this 
maximal probability of conversion, $P_{S}(\ket{\alpha}\rightarrow 
\ket{\beta})$ through
\begin{theorem}
\label{thm-Vidal}
Let $\vec{\alpha}=(\alpha_{1},...,\alpha_{n})$ and 
$\vec{\beta}=(\beta_{1},...,\beta_{n})$ 
be ordered Schmidt vectors for states $\ket{\alpha}$ and $\ket{\beta}$, 
assuming $\alpha_n,\beta_n >0$. Define $\displaystyle{E_k(\lambda) = 
1-\sum_{l=1}^{k-1}\lambda_{l}}$. Then, the optimal transformation 
probability is given by
\begin{equation}\label{formulavidal}
P_{S}(\ket{\alpha}\rightarrow
\ket{\beta})=\min_{1\leq k\leq n}\left\lbrace 
\frac{E_k(\alpha)}{E_k(\beta)} \right\rbrace .
\end{equation}
\end{theorem}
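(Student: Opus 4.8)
The plan is to establish Vidal's formula \eqref{formulavidal} in two directions: first showing that no SLOCC protocol can exceed the stated minimum (the converse/upper bound), and then exhibiting an explicit protocol that attains it (achievability). For the upper bound, I would use the fact that any SLOCC conversion can be realized as a two-outcome generalized measurement on one party (say $A$), followed by local unitaries and classical communication; the relevant ``success'' branch corresponds to a Kraus operator $M$ acting on $\ket{\alpha}$. The key structural input is the monotonicity of the quantities $E_k(\lambda)$ under LOCC: I would introduce the partial sums of the \emph{smallest} Schmidt coefficients, $E_k(\lambda) = 1 - \sum_{l=1}^{k-1}\lambda_l = \sum_{l\geq k}\lambda_l$, and argue that under a successful conversion with probability $p$ one must have $p\, E_k(\beta) \leq E_k(\alpha)$ for every $k$. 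This follows because the functions $\lambda \mapsto E_k(\lambda)$ are, up to the ordering, concave symmetric functions of the Schmidt coefficients, hence Schur-concave, and Nielsen's Theorem~\ref{thm-Nielsen} tells us that deterministic LOCC can only increase them; combined with the probabilistic weighting coming from the measurement branch, one gets $E_k(\alpha) \geq \sum_i p_i E_k(\beta_i) \geq p\, E_k(\beta)$ where the last step uses that the other branches contribute nonnegatively. Rearranging gives $p \leq E_k(\alpha)/E_k(\beta)$ for all $k$, hence $p \leq \min_k E_k(\alpha)/E_k(\beta)$.

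For achievability, the plan is to construct a protocol meeting the bound. Let $p^\star = \min_{1\leq k\leq n} E_k(\alpha)/E_k(\beta)$. I would define a diagonal operator (in the Schmidt basis of $\ket{\alpha}$) $M$ with eigenvalues chosen so that $M\ket{\alpha}$ is proportional to a state whose Schmidt vector $\vec{\gamma}$ satisfies $\vec{\gamma} \preceq \vec{\beta}$ (so that Nielsen's criterion then allows the remaining deterministic step $\ket{\gamma}\to\ket{\beta}$), while $M^\dagger M \leq \mathbb{1}$ so that $M$ is a legitimate measurement branch occurring with probability exactly $p^\star$. Concretely one wants $\gamma_l \propto p^\star \alpha_l$ renormalized, or a suitable truncation thereof; the choice of $p^\star$ as the minimum over $k$ is precisely what guarantees simultaneously that $M^\dagger M \leq \mathbb{1}$ (i.e.\ the branch probability does not exceed $1$ in any block) and that the resulting $\vec{\gamma}$ majorization-dominates into $\vec{\beta}$. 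Then one completes the branch with the deterministic Nielsen protocol $\ket{\gamma}\to\ket{\beta}$, which succeeds with probability $1$, so the overall success probability is $p^\star$.

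I expect the main obstacle to be the achievability step, specifically pinning down the correct eigenvalues of $M$ and verifying the two constraints (sub-normalization of $M^\dagger M$ and the majorization $\vec\gamma \preceq \vec\beta$) hold \emph{simultaneously} exactly when the scaling factor equals $\min_k E_k(\alpha)/E_k(\beta)$ and not before. This is a slightly delicate optimization: one must check that the index $k^\star$ achieving the minimum in \eqref{formulavidal} is the ``binding'' block, and handle the tail coefficients carefully so that $M$ remains positive and the target Schmidt vector has the right length $n$ (the hypothesis $\alpha_n,\beta_n>0$ is what makes all the ratios well-defined and keeps the dimension from collapsing). A secondary technical point is justifying rigorously that a general SLOCC protocol — which may involve many rounds of two-way classical communication and measurements on both sides — reduces, for the purpose of bounding the optimal success probability, to a single local filtering operation; this is standard but would need a sentence invoking that multi-round LOCC on pure states offers no advantage over single-party filtering plus local unitaries, together with convexity of $p \mapsto$ (best continuation probability) to average over branches.
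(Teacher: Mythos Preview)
The paper does not prove Theorem~\ref{thm-Vidal}; it simply quotes Vidal's result with a citation to \cite{vidal} and uses it as a black box throughout Sections~\ref{catslocc} and~\ref{sefslocc}. There is therefore no ``paper's own proof'' to compare against.

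That said, your sketch is the standard route Vidal himself takes: identify the family $E_k$ as entanglement monotones (so $E_k(\alpha)\geq\sum_i p_i\,E_k(\beta_i)\geq p\,E_k(\beta)$ gives the converse), and then build an explicit local filter achieving the minimum. Your upper-bound argument is essentially complete. The achievability step, as you already anticipate, is the part that needs real work: the naive choice ``$\gamma_l\propto p^\star\alpha_l$'' does \emph{not} generally produce a $\vec\gamma$ with $\vec\gamma\preceq\vec\beta$, and the actual construction in \cite{vidal} proceeds by a finite sequence of one-step filters, each saturating the monotone $E_{k^\star}$ at the currently binding index and strictly improving the others, rather than a single diagonal $M$. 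So your plan is sound in outline, but the sentence ``or a suitable truncation thereof'' hides the only non-routine part of the proof; you would need to either reproduce Vidal's inductive construction or give a direct filter whose eigenvalues you verify explicitly.
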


For instance, if 
\begin{equation}\label{exampleslocc}
\vec{\alpha}=(0.6,0.2,0.2)\text{ and } 
\vec{\beta}=(0.5,0.4,0.1)
\end{equation}
are the Schmidt vectors for the two-qutrit states $\ket{\alpha}$ and 
$\ket{\beta}$, respectively, we get:

\begin{subequations}
\begin{align}
P_{S}(\ket{\alpha}\rightarrow \ket{\beta})=0.8, \\
P_{S}(\ket{\beta}\rightarrow \ket{\alpha})=0.5.
\end{align}
\end{subequations}
The following Proposition shows that the optimal probability of conversion 
$P_S$ attains $1$ precisely when $\ket{\alpha}$ access $\ket{\beta}$.

\begin{proposition} \label{conexao}
 Let $\vec{\alpha}$ and $\vec{\beta}$ be a pair of random independent 
Schmidt vectors with same size $n$, then the event $ \{ \ket{\alpha} 
\rightarrow \ket{\beta} \}$ is equal to event $\{P_{S}(\ket{\alpha} 
\rightarrow \ket{\beta})=1\}$. In particular $\mathbb{P} ( 
\ket{\alpha} 
\rightarrow \ket{\beta}) = \mathbb{P}(P_{S}(\ket{\alpha} 
\rightarrow 
\ket{\beta})=1)$.
\end{proposition}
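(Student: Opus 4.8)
The plan is to prove the set equality $\{\ket{\alpha}\to\ket{\beta}\}=\{P_S(\ket{\alpha}\to\ket{\beta})=1\}$ directly, by juxtaposing Nielsen's criterion (Theorem~\ref{thm-Nielsen}) with Vidal's formula (Theorem~\ref{thm-Vidal}); once the two events coincide as subsets of the sample space, equality of the probabilities is automatic, and in fact requires neither independence nor any feature of the sampling measure beyond the genericity remark below.

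First I would reduce the condition $P_S=1$ to a system of partial-sum inequalities. Because $\vec\alpha$ and $\vec\beta$ have common size $n$ and, being Haar-sampled, almost surely have full Schmidt rank, we have $\alpha_n,\beta_n>0$, so Theorem~\ref{thm-Vidal} applies and every denominator $E_k(\beta)=1-\sum_{l=1}^{k-1}\beta_l$ in \eqref{formulavidal} is strictly positive. The $k=1$ term of the minimum equals $E_1(\alpha)/E_1(\beta)=1/1=1$ (empty sums), hence $P_S\le 1$ always, and $P_S=1$ if and only if $E_k(\alpha)\ge E_k(\beta)$ for every $1\le k\le n$. Expanding $E_k$, this says exactly $\sum_{l=1}^{k-1}\alpha_l\le\sum_{l=1}^{k-1}\beta_l$ for $k=1,\dots,n$, i.e. the head sums of $\vec\alpha$ are dominated by those of $\vec\beta$ at all orders $0,1,\dots,n-1$; the remaining order $n$ holds with equality by normalisation. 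Those are precisely the majorisation inequalities \eqref{majoracao}, so $P_S(\ket{\alpha}\to\ket{\beta})=1\iff\vec\alpha\preceq\vec\beta\iff\ket{\alpha}\to\ket{\beta}$ by Theorem~\ref{thm-Nielsen}, and taking probabilities yields the stated identity.

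The only point requiring care is the degenerate configuration in which $\vec\alpha$ or $\vec\beta$ is not of full rank: then some $E_k(\beta)$ vanishes and Vidal's formula as stated is ill-defined. This is exactly where the hypothesis that the vectors are random and of equal size $n$ enters — such configurations form a set of measure zero under the Haar-induced distribution, so they contribute to neither $\mathbb{P}(\ket{\alpha}\to\ket{\beta})$ nor $\mathbb{P}(P_S(\ket{\alpha}\to\ket{\beta})=1)$, and the set equality proved on the full-measure complement suffices. I expect this measure-zero bookkeeping to be the only mildly delicate step; the rest is the elementary observation that the tail inequalities $E_k(\alpha)\ge E_k(\beta)$ and the head inequalities $\sum_{l\le j}\alpha_l\le\sum_{l\le j}\beta_l$ are literally the same conditions written in complementary form.
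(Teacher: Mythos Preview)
Your proof is correct and follows essentially the same route as the paper: both arguments amount to observing that the head-sum inequalities $\sum_{l\le k}\alpha_l\le\sum_{l\le k}\beta_l$ of Nielsen's criterion are equivalent, via complementation and normalisation, to the tail-sum inequalities $E_k(\alpha)\ge E_k(\beta)$, with equality at $k=1$ forcing the minimum in Vidal's formula to be $1$. Your version is in fact slightly more careful than the paper's, which proves only the forward implication explicitly and does not comment on the measure-zero set where $\alpha_n$ or $\beta_n$ vanishes.
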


\begin{proof}
Suppose that $\ket{\alpha} \rightarrow \ket{\beta}$, thus:
\begin{equation}
\sum_{i=1}^{k} \alpha_{i} \leq \sum_{i=1}^{k} \beta_{i}, \,\, \forall k 
\,\, \in \{1,2,...,n \}. 
\end{equation}
Then $E_{k}(\alpha)=1-\sum_{i=1}^{k-1}\alpha_{k} \geq 
1-\sum_{i=1}^{k-1}\beta_{k}=E_{k}(\beta), \,\, \forall \,\, k \in 
\{1,2,..., n-1 \}$ with the equality holding if $k=1$, therefore 
$P_{S}(\ket{\alpha} \rightarrow \ket{\beta})=1.$ The converse is similar.
\end{proof}

Note that from our considerations at the end of section~\ref{tip} we expect 
that $\mathbb{P} ( \ket{\alpha} \rightarrow \ket{\beta}) \approx 0$ (using LOCC) for 
large $n$, since the event $\{ \ket{\alpha} \rightarrow \ket{\beta} \}$ is 
in the complement of $\{ \ket{\alpha} \nleftrightarrow 
\ket{\beta}\}$.

We numerically estimate the SLOCC average rate of conversion between random 
states.
For this, we generate
incomparable Schmidt vectors following the \emph{Haar measure} and 
calculate the probabilities of conversion from the first to the
second and also the maximum conversion probability.
The results shown in Figure~\ref{Fig:ConvRatehaar} clearly shows that, 
given a 
random pair $\alpha$,
$\beta$, it is very common to have a large probability of conversion from 
some of them to the other,
 that is $\mathbb{E}[\text{max}\{P_{S}(\ket{\alpha} \rightarrow 
\ket{\beta},P_{S}(\ket{\beta} \rightarrow \ket{\alpha})\}] \gtrsim 0.8$. 
Moreover we have a smaller, but still significant, average probability 
of 
conversion $\mathbb{E}[P_{S}(\ket{\alpha} \rightarrow 
\ket{\beta})]$, slightly below $0.6$.

\begin{figure}[h]
\centering
\includegraphics[scale=0.5]{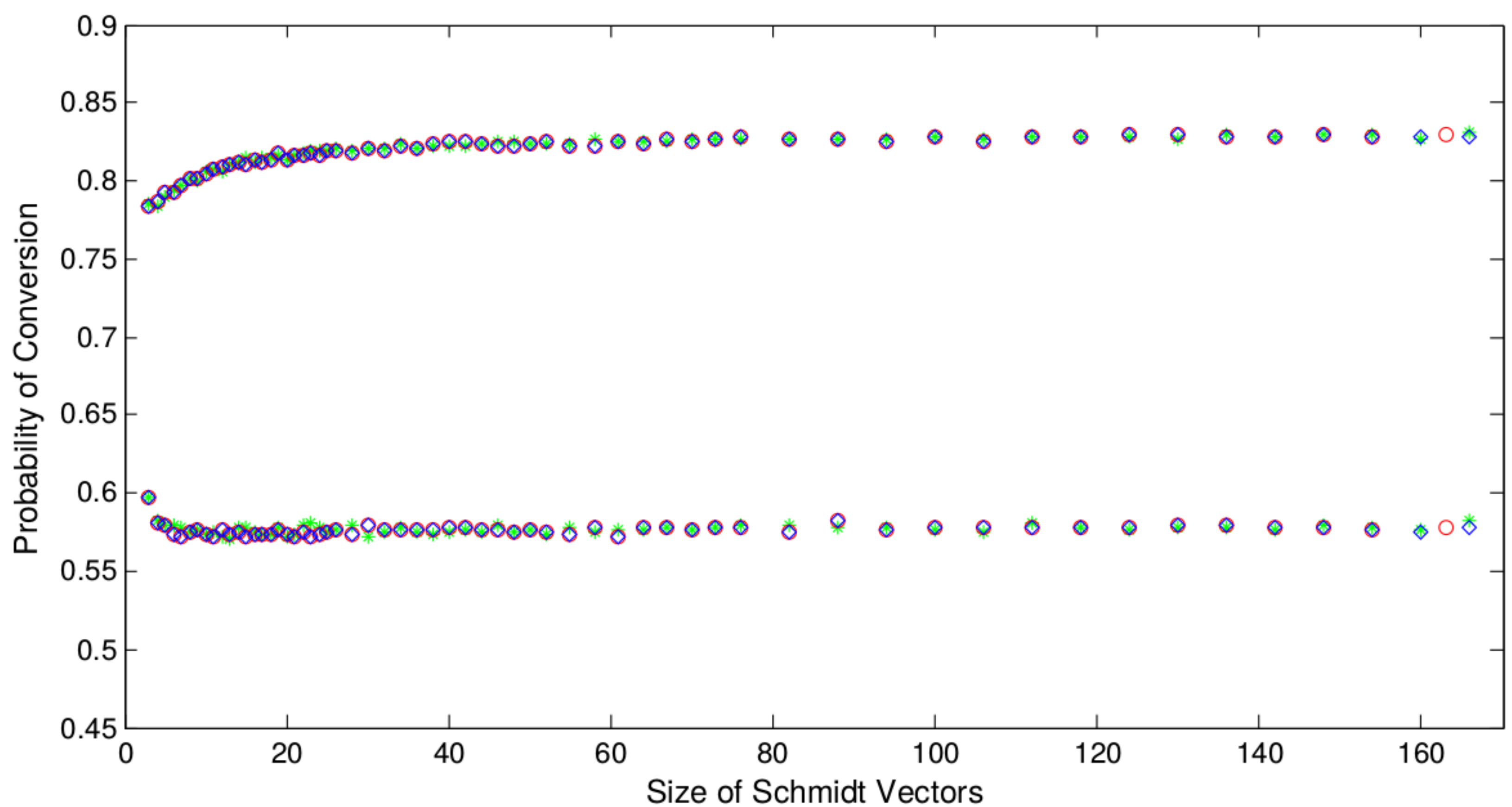}
\caption{Sampled probability of conversion under SLOCC for randomly chosen 
(following the Haar measure) 
incomparable states as function 
of the dimension of each system. Each symbol represents an average over a distinct set of randomly
chosen pairs. There are two sets of three symbols for each explored 
dimension. For each dimension,
the smaller results consider conversion from the first to the second 
$\mathbb{E}[P_{S}(\ket{\alpha} \rightarrow 
\ket{\beta})]$, and 
the larger the maximum
conversion rate $\mathbb{E}[\text{max}\{P_{S}(\ket{\alpha} \rightarrow 
\ket{\beta},P_{S}(\ket{\beta} \rightarrow \ket{\alpha})\}]$.}
\label{Fig:ConvRatehaar}
\end{figure}

Also in the probabilistic scenario the presence of an extra state can 
improve the probability of conversion between two states \cite{FDY04}. As a chemical catalyst, this extra state
is used, but not consumed, to increase the rate (probability) of a reaction (conversion).  In the
above example, if 
$\vec{\kappa}=(0.65,0.35)$, we arrive at:
\begin{subequations}
\begin{align}
\vec{\alpha} \otimes \vec{\kappa}&=(0.39, 0.21, 0.13, 0.13, 0.07, 0.07), \\
\vec{\beta} \otimes \vec{\kappa} &= (0.325, 0.26, 0.175, 0.14, 0.065, 0.035).
\end{align}
\end{subequations}
Therefore $P_{S}(\ket{\alpha} \otimes \ket{\kappa} \rightarrow \ket{\beta} 
\otimes \ket{\kappa}) \simeq 
0.904$, and $\ket{\kappa}$ can be viewed as a \emph{probabilistic-catalyst} 
in the stochastic scenario for the conversion that starts in $\alpha$ and 
ends in $\beta$, despite the fact that $P_{S}(\ket{\beta} \otimes 
\ket{\kappa} \rightarrow \ket{\alpha} \otimes \ket{\kappa}) =
0.5$, and then $\ket{\kappa}$ does not increases the probability of  
conversion for the transformation that starts in $\beta$ and ends in 
$\alpha$. In this sense
Jonathan and Plenio pointed out~\cite{daniel} that if $P_{S}(\ket{\alpha} 
\rightarrow \ket{\beta})$, 
under the best local strategy, is equal to $\alpha_n / \beta_n$, then 
this probability can 
not be increased by the presence of any catalyst state. Feng 
\textit{et.~al.}~\cite{FDY04} 
improved 
this result, obtaining the following theorem:
\begin{theorem}\label{thm_orientais_prob}
 Let $\vec{\alpha}$ and $\vec{\beta}$ be two $n$--dimensional probability vectors written in non-increasing order. 
 There is a probability vector $\vec{\kappa}$ such that $P_{S}(\vec{\alpha} 
\otimes \vec{\kappa} \rightarrow \vec{\beta} \otimes \vec{\kappa}) > 
P_{S}(\vec{\alpha} \rightarrow 
\vec{\beta})$ if, and only if, 

\begin{equation}
P_{S}(\vec{\alpha} \rightarrow \vec{\beta}) <
\text{\normalfont{min}}\left\lbrace 
\frac{\alpha_n}{\beta_n},1 \right\rbrace .
\end{equation}
\end{theorem}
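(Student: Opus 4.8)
The statement splits into an easy implication and a hard one. For the easy direction I would argue the contrapositive: $\min\{\alpha_n/\beta_n,1\}$ is an upper bound on the conversion probability that no catalyst can exceed. Indeed, by Vidal's formula (Theorem~\ref{thm-Vidal}) the $k=n$ term gives $P_{S}(\vec\alpha\rightarrow\vec\beta)\le E_n(\alpha)/E_n(\beta)=\alpha_n/\beta_n$, and trivially $P_{S}\le 1$ (the $k=1$ term equals $1$); the same formula applied to $\vec\alpha\otimes\vec\kappa$ and $\vec\beta\otimes\vec\kappa$, whose least Schmidt coefficients are $\alpha_n\kappa_m$ and $\beta_n\kappa_m$, yields $P_{S}(\vec\alpha\otimes\vec\kappa\rightarrow\vec\beta\otimes\vec\kappa)\le\min\{\alpha_n/\beta_n,1\}$ for every probability vector $\vec\kappa$. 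Hence, if $P_{S}(\vec\alpha\rightarrow\vec\beta)$ already equals $\min\{\alpha_n/\beta_n,1\}$, no $\vec\kappa$ can raise it; contrapositively, if some $\vec\kappa$ does raise it then, since $P_{S}\le\min\{\alpha_n/\beta_n,1\}$ always, $P_{S}(\vec\alpha\rightarrow\vec\beta)<\min\{\alpha_n/\beta_n,1\}$.

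For the converse, write $p=P_{S}(\vec\alpha\rightarrow\vec\beta)$; the hypothesis is $p<1$ and $p<\alpha_n/\beta_n$, so the minimum in Vidal's formula is attained at some interior index $2\le k_0\le n-1$ (in particular $n\ge 3$) — there is strict slack in Vidal's inequalities both at the top, $k=1$, and at the bottom, $k=n$. The plan is to exhibit an explicit catalyst; a two-level one $\vec\kappa=(\kappa,1-\kappa)$ with $\kappa$ in a suitable subinterval of $(1/2,1)$ should suffice. Tensoring, $\vec\alpha\otimes\vec\kappa$ and $\vec\beta\otimes\vec\kappa$ each have $2n$ Schmidt coefficients whose decreasing rearrangements are piecewise constant in $\kappa$, changing only when $\kappa/(1-\kappa)$ crosses one of the ratios of entries of $\vec\alpha$ (resp.\ $\vec\beta$). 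I would then verify, via Theorem~\ref{thm-Vidal}, that all $2n$ inequalities $E_j(\vec\alpha\otimes\vec\kappa)>p\,E_j(\vec\beta\otimes\vec\kappa)$, $1\le j\le 2n$, are strict, so $P_{S}(\vec\alpha\otimes\vec\kappa\rightarrow\vec\beta\otimes\vec\kappa)>p$.

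The mechanism — and the guide for choosing $\kappa$ — is that when $\vec\alpha\otimes\vec\kappa$ and $\vec\beta\otimes\vec\kappa$ are rearranged \emph{differently} near the critical cut, the binding original ratio $E_{k_0}(\alpha)/E_{k_0}(\beta)=p$ need not reappear among the $2n$ tensor ratios, and the matching partial sums acquire strict slack. Two facts carry the estimates: the bottom term $j=2n$ gives ratio $\alpha_n/\beta_n>p$ strictly by hypothesis; and whenever a tensor partial sum has the shape $1-\kappa\,(\text{leading }\alpha\text{-mass})$ against $1-\kappa\,(\text{leading }\beta\text{-mass})$, the ratio strictly increases exactly because $p<1$ forces the $\alpha$-mass to dominate the $\beta$-mass (for instance, when $k_0=2$ one has $\alpha_1>\beta_1$ and $\frac{1-\kappa\alpha_1}{1-\kappa\beta_1}>\frac{1-\alpha_1}{1-\beta_1}=p$ for every $\kappa\in(0,1)$). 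The remaining, ``interior'' tensor inequalities are mediants of the original non-strict Vidal inequalities, hence $\ge p$ at the block-structured value of $\kappa$; one must push $\kappa$ just far enough to make them strict without breaking any of the others. Controlling this perturbation simultaneously for all $2n$ indices — equivalently, checking that its first-order effect, positive since $p<1$, never collides with an adjacent constraint — is the bookkeeping-heavy heart of the argument, and the step I expect to be the main obstacle; it may force $\vec\kappa$ of dimension $>2$ in some configurations, in which case the same scheme is run with a near-uniform catalyst.

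A cleaner conceptual route I would keep in reserve: rewrite Vidal's formula as $P_{S}(\vec x\rightarrow\vec y)\ge q\iff \vec x\preceq(1-q)\,\vec e_1+q\,\vec y$, and observe that $(1-q)\,\vec e_1+q(\vec\beta\otimes\vec\kappa)$ majorises $[(1-q)\,\vec e_1+q\,\vec\beta]\otimes\vec\kappa$ (a one-line doubly-stochastic-map check: the former differs from the latter only in the top block, where it is strictly more spread out). Thus $\vec\alpha\otimes\vec\kappa\preceq[(1-q)\,\vec e_1+q\,\vec\beta]\otimes\vec\kappa$ already gives $P_{S}(\vec\alpha\otimes\vec\kappa\rightarrow\vec\beta\otimes\vec\kappa)\ge q$, which reduces the claim, for $q$ slightly above $p$, to the existence of a catalyst for the \emph{forbidden deterministic} conversion $\vec\alpha\rightarrow(1-q)\,\vec e_1+q\,\vec\beta$ — a standard catalysis instance, since that forbidden conversion's obstruction is an interior Nielsen inequality with strict slack at the extreme coordinates (guaranteed by $p<\alpha_n/\beta_n$); the borderline sub-case where the obstruction sits at the first Nielsen inequality is precisely $\alpha_1>\beta_1$, and is settled by the direct estimate above.
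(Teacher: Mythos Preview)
The paper does not prove Theorem~\ref{thm_orientais_prob}; it is quoted verbatim from Feng, Duan and Ying~\cite{FDY04}. So there is no in-paper proof to compare against, and your proposal stands or falls on its own.

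Your necessity direction is complete and correct: the $k=1$ and $k=n$ terms of Vidal's formula give the universal upper bound $\min\{\alpha_n/\beta_n,1\}$, which survives tensoring with any $\vec\kappa$ because the extreme Schmidt coefficients of $\vec\alpha\otimes\vec\kappa$ and $\vec\beta\otimes\vec\kappa$ are $\alpha_n\kappa_m$ and $\beta_n\kappa_m$.

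The sufficiency direction, however, is not closed. In your first route you yourself flag the gap: you do not verify that a single two-level $\vec\kappa$ makes all $2n$ Vidal ratios strictly exceed $p$ simultaneously, and you leave open whether higher-dimensional catalysts may be needed. That is precisely the nontrivial content of the theorem, and the original proof in~\cite{FDY04} supplies an explicit construction here; without it your argument is a plausible plan, not a proof.

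Your second route is elegant---the rewriting $P_S(\vec x\to\vec y)\ge q\Leftrightarrow\vec x\preceq(1-q)\vec e_1+q\vec y$ is correct, and the majorisation $[(1-q)\vec e_1+q\vec\beta]\otimes\vec\kappa\preceq(1-q)\vec e_1+q(\vec\beta\otimes\vec\kappa)$ does follow from a block-wise doubly-stochastic map---but the reduction to ``a standard catalysis instance'' fails. When the Vidal minimiser is $k_0=2$, for any $q>p$ one has $\alpha_1>(1-q)+q\beta_1=(\vec\gamma_q)_1$, and since tensoring with any $\vec\kappa$ preserves the ordering of largest entries, \emph{no} catalyst can make $\vec\alpha\to\vec\gamma_q$ deterministic. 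You acknowledge this and send the $k_0=2$ case back to the first route, which is itself incomplete. Even for $k_0>2$, ``interior obstruction with strict slack at the extremes'' is not a known sufficient condition for the existence of a deterministic catalyst (the trumping relation is characterised by families of R\'enyi-type inequalities, not merely by the endpoint Nielsen conditions), so the appeal to ``standard catalysis'' is unjustified.

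In short: the easy half is fine; for the hard half you have identified the right mechanism (a two-level catalyst breaking the binding interior ratio) but have not carried out the estimate that makes it a proof.
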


A deeper connection between LOCC catalysis and its stochastic counterpart   
connects the probability of occurrence of the event 
$ \{ \alpha \rightarrow \beta \}$ and the maximal probability of stochastic 
conversion, $P_{S}(\ket{\alpha}\otimes \ket{\phi}\rightarrow 
\ket{\beta} \otimes \ket{\phi})$.

\begin{proposition} \label{boundlemma}
 Let $\vec{\alpha}$ and $\vec{\beta}$ be a pair of random independent 
Schmidt vectors with same size $n$. Then
\begin{equation}
\mathbb{P}[\sup_{\phi} \{P_{S}(\ket{\alpha} \otimes 
\ket{\phi} \rightarrow 
\ket{\beta} \otimes \ket{\phi}) \} > P_{S}(\ket{\alpha} \rightarrow 
\ket{\beta}) ] \quad \geq \quad  \frac{1}{2} - \mathbb{P}(\ket{\alpha} 
 \rightarrow \ket{\beta}).
\end{equation}
\end{proposition}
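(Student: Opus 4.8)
The plan is to collapse this probabilistic inequality onto the purely combinatorial criterion of Theorem~\ref{thm_orientais_prob}. Write $\mathcal{E}$ for the event on the left-hand side, \ie the event that some ancillary state strictly boosts the optimal conversion probability. Taking $\ket{\phi}$ supported on a one-dimensional space gives $P_{S}(\ket{\alpha}\otimes\ket{\phi}\rightarrow\ket{\beta}\otimes\ket{\phi})=P_{S}(\ket{\alpha}\rightarrow\ket{\beta})$, so $\sup_{\phi}P_{S}(\ket{\alpha}\otimes\ket{\phi}\rightarrow\ket{\beta}\otimes\ket{\phi})\geq P_{S}(\ket{\alpha}\rightarrow\ket{\beta})$ always holds, and by definition of supremum the strict inequality $\sup_{\phi}(\cdots)>P_{S}(\ket{\alpha}\rightarrow\ket{\beta})$ occurs exactly when there is a single probability vector $\vec{\kappa}$ with $P_{S}(\vec{\alpha}\otimes\vec{\kappa}\rightarrow\vec{\beta}\otimes\vec{\kappa})>P_{S}(\vec{\alpha}\rightarrow\vec{\beta})$. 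By Theorem~\ref{thm_orientais_prob} this is precisely the event $\{P_{S}(\vec{\alpha}\rightarrow\vec{\beta})<\min\{\alpha_{n}/\beta_{n},1\}\}$.

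Next I restrict to the event $\mathcal{G}=\{\alpha_{n}\geq\beta_{n}\}$, on which $\min\{\alpha_{n}/\beta_{n},1\}=1$. There, $\mathcal{E}$ reduces to $\{P_{S}(\vec{\alpha}\rightarrow\vec{\beta})<1\}$, and since $P_{S}$ is a probability this is $\{P_{S}(\vec{\alpha}\rightarrow\vec{\beta})\neq 1\}$, which by Proposition~\ref{conexao} is exactly the event $\{\ket{\alpha}\nrightarrow\ket{\beta}\}$. Hence $\mathcal{E}\supseteq\mathcal{G}\cap\{\ket{\alpha}\nrightarrow\ket{\beta}\}$, and a one-line estimate ($\mathbb{P}(\mathcal{G}\cap\{\ket{\alpha}\nrightarrow\ket{\beta}\})=\mathbb{P}(\mathcal{G})-\mathbb{P}(\mathcal{G}\cap\{\ket{\alpha}\rightarrow\ket{\beta}\})\geq\mathbb{P}(\mathcal{G})-\mathbb{P}(\ket{\alpha}\rightarrow\ket{\beta})$) yields
\[
\mathbb{P}(\mathcal{E})\ \geq\ \mathbb{P}(\mathcal{G})-\mathbb{P}(\ket{\alpha}\rightarrow\ket{\beta}).
\]

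Finally I bound $\mathbb{P}(\mathcal{G})$ from below by $\tfrac{1}{2}$: since $\vec{\alpha}$ and $\vec{\beta}$ are independent and identically distributed, the events $\{\alpha_{n}\geq\beta_{n}\}$ and $\{\beta_{n}\geq\alpha_{n}\}$ have the same probability while their union is the whole sample space, so $2\,\mathbb{P}(\mathcal{G})\geq 1$. (Ties $\alpha_{n}=\beta_{n}$ are harmless, being absorbed into $\mathcal{G}$, and the standing assumption that the least Schmidt coefficients are positive is what permits invoking Theorems~\ref{thm-Vidal} and~\ref{thm_orientais_prob}.) Combining the three steps gives $\mathbb{P}(\mathcal{E})\geq\frac{1}{2}-\mathbb{P}(\ket{\alpha}\rightarrow\ket{\beta})$, as claimed. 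No analytic estimate enters anywhere; the only delicate point — and the step I would check most carefully — is the first one, namely that strict improvement of the \emph{supremum} over all $\phi$ is equivalent to the existence of a single strictly improving catalyst and that this matches the $\min\{\alpha_{n}/\beta_{n},1\}$ threshold of Feng \textit{et al.}
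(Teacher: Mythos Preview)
Your proof is correct and follows essentially the same route as the paper: identify the event via Theorem~\ref{thm_orientais_prob}, pass through Proposition~\ref{conexao} to turn $\{P_{S}=1\}$ into $\{\ket{\alpha}\rightarrow\ket{\beta}\}$, and use exchangeability of $\alpha_{n},\beta_{n}$ for the $\tfrac{1}{2}$. The only difference is cosmetic---the paper bounds the complement via a union $\{P_{S}=\alpha_{n}/\beta_{n}\}\cup\{P_{S}=1\}\subseteq\{\alpha_{n}\leq\beta_{n}\}\cup\{\ket{\alpha}\rightarrow\ket{\beta}\}$, whereas you intersect with $\{\alpha_{n}\geq\beta_{n}\}$ directly; your version has the small advantage that $\mathbb{P}(\alpha_{n}\geq\beta_{n})\geq\tfrac{1}{2}$ holds without any assumption on ties.
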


\begin{proof}

\begin{subequations}
\begin{align}
 \mathbb{P}[P_{S}(\ket{\alpha} \otimes \ket{\phi} \rightarrow 
\ket{\beta} \otimes \ket{\phi}) > P_{S}(\ket{\alpha} \rightarrow 
\ket{\beta}) ] & = 1 - \mathbb{P} \{  P_{S}(\ket{\alpha} \rightarrow 
\ket{\beta}) = \text{min}(\alpha_{n}/ \beta_{n},1) \} \label{l1} \\ 
& \geq 1  - 
\mathbb{P}\{ P_{S}(\ket{\alpha} \rightarrow \ket{\beta}) = \alpha_n / 
\beta_{n} \}
- \mathbb{P} \{ \ket{\alpha} \rightarrow \ket{\beta} \}  \label{l2} \\
& \geq \frac{1}{2} - \mathbb{P} \{ \ket{\alpha} \rightarrow \ket{\beta}  
\}. \label{l3}
\end{align}
\end{subequations}
Where~\eqref{l1} comes from Thm~\ref{thm_orientais_prob}, \eqref{l2} from 
set theory and Proposition \ref{conexao}, and finally \eqref{l3} from 
$\{P_{S}(\ket{\alpha} \rightarrow 
\ket{\beta}) = \alpha_n / 
\beta_{n} \} \subseteq \{ \alpha_n \leq \beta_{n}\} 
$.
\end{proof}
\begin{remark}
   From Remark~\ref{remark4} we know that as $n$ grows, $ \mathbb{P}(\ket{\alpha} 
 \rightarrow \ket{\beta})\approx 0$, so we conclude from Proposition~\ref{boundlemma} that $\mathbb{P}[P_{S}(\vec{\alpha} \rightarrow \vec{\beta}) <
\text{\normalfont{min}}\left\lbrace 
\frac{\alpha_n}{\beta_n},1 \right\rbrace]\gtrsim 1/2$. 
\label{remark2}
\end{remark}

In Ref.~\cite{FDYIEEE} a necessary and sufficient
condition for a state $\ket{\kappa}$ to work as a probabilistic catalyst was provided:
\begin{theorem}\label{theorem_feng}
 Suppose that $\vec{\alpha}$ and $\vec{\beta}$ are two non-increasingly ordered  
 $n$--dimensional probability vectors, and $P(\vec{\alpha} \rightarrow \vec{\beta}) 
 < \min\left\lbrace\frac{\alpha_n}{\beta_n},1\right\rbrace$. Define
 \begin{equation} \label{eq_oriente}
 L= \left\lbrace l; 1<l<n, \,\, \mbox{and} \,\, P(\vec{\alpha} \rightarrow \vec{\beta}) = 
\frac{E_{l}(\vec{\alpha})}{E_{l}(\vec{\beta})}\right\rbrace.
 \end{equation}
 Then a non-increasingly ordered $k$-dimensional probability vector $\vec{\kappa}$ serves
 as a probabilistic catalyst for the conversion from $\ket{\alpha}$ to $\ket{\beta}$ if, and 
 only if, for all $r_1, r_2, ..., r_k \in L \cup\{n+1\}$ satisfying $r_1 \geq r_2 \geq ...
  \geq r_k \neq n+1$, there exist $i$ and $j$, with $1 \leq j < i \leq k$, such that
  \begin{align}\label{eq_desigualdades}
 \frac{\kappa_i}{\kappa_j} < \frac{\beta_{r_j}}{\beta_{r_{i}-1}} \,\, \mbox{or} \,\, 
 \frac{\kappa_i}{\kappa_j} > \frac{\beta_{r_{j}-1}}{\beta_{r_{i}}}.
\end{align}
By definition, whenever  one of the 
inequalities~(\ref{eq_desigualdades}) includes an index  $n+1$, it is considered to be
violated, so the other one must necessarily be satisfied.
\end{theorem}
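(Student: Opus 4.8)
The plan is to reduce everything to Vidal's formula~\eqref{formulavidal} applied to the enlarged pair $\vec\alpha\otimes\vec\kappa$, $\vec\beta\otimes\vec\kappa$, and then to a combinatorial statement about the tail sums $E_m$ of a tensor product. Write $p:=P_S(\vec\alpha\rightarrow\vec\beta)$. Running the optimal $\ket{\alpha}\rightarrow\ket{\beta}$ protocol on the first factor while leaving the catalyst untouched gives $P_S(\vec\alpha\otimes\vec\kappa\rightarrow\vec\beta\otimes\vec\kappa)\geq p$, i.e.\ by Theorem~\ref{thm-Vidal}, $E_m((\alpha\otimes\kappa)^{\downarrow})\geq p\,E_m((\beta\otimes\kappa)^{\downarrow})$ for every $m$. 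Hence $\vec\kappa$ \emph{fails} to be a probabilistic catalyst precisely when $E_m((\alpha\otimes\kappa)^{\downarrow})=p\,E_m((\beta\otimes\kappa)^{\downarrow})$ holds for at least one $m$, and the theorem becomes a characterisation of when such an $m$ can occur.

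Next I would set up the combinatorics. Since $\vec\alpha,\vec\kappa$ are non-increasing, the $s$ smallest among the products $\{\alpha_a\kappa_l\}$ can always be chosen to form a \emph{staircase} index set $U_{\vec r}=\{(a,l):a\geq r_l\}$ with $r_1\geq r_2\geq\cdots\geq r_k$ (an up-set of $\{1,\dots,n\}\times\{1,\dots,k\}$, with $r_l=n+1$ meaning ``column $l$ empty''), whose tail sum equals $\sum_l\kappa_l E_{r_l}(\vec\alpha)$; the same holds for $\vec\beta$. I would then record: (i) $E_m((\alpha\otimes\kappa)^{\downarrow})=\sum_l\kappa_l E_{u_l}(\vec\alpha)$ for some staircase $\vec u$ of size $s:=nk+1-m$ (and, separately, $E_m((\beta\otimes\kappa)^{\downarrow})=\sum_l\kappa_l E_{v_l}(\vec\beta)$ for a possibly different staircase $\vec v$); and (ii) a staircase $U_{\vec r}$ realises the $s$ smallest products of $\vec\beta\otimes\vec\kappa$ if and only if $\beta_{r_j}\kappa_j\leq\beta_{r_i-1}\kappa_i$ and $\beta_{r_i}\kappa_i\leq\beta_{r_j-1}\kappa_j$ for all $j<i$, which after dividing is exactly the negation of the pair~\eqref{eq_desigualdades}, the ``index $n+1$ is violated'' convention taking care of the empty-column cases.

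The core would be a short sandwich argument. Suppose $\vec\kappa$ is not a catalyst and fix $m$ with $E_m((\alpha\otimes\kappa)^{\downarrow})=p\,E_m((\beta\otimes\kappa)^{\downarrow})$; take staircases $\vec u$ (realising the $s$ smallest products of $\vec\alpha\otimes\vec\kappa$) and $\vec v$ (the same for $\vec\beta\otimes\vec\kappa$). Combining $E_l(\vec\alpha)\geq p\,E_l(\vec\beta)$ for all $l\in\{1,\dots,n+1\}$ (immediate from $p=\min_{1\leq l\leq n}E_l(\vec\alpha)/E_l(\vec\beta)$, with $l=n$ giving the \emph{strict} bound $\alpha_n/\beta_n>p$) with the minimality of $\vec v$ among size-$s$ sets yields
\begin{equation}
\sum_l\kappa_l E_{u_l}(\vec\alpha)\ \geq\ p\sum_l\kappa_l E_{u_l}(\vec\beta)\ \geq\ p\sum_l\kappa_l E_{v_l}(\vec\beta)\ =\ E_m((\alpha\otimes\kappa)^{\downarrow})\ =\ \sum_l\kappa_l E_{u_l}(\vec\alpha),
\end{equation}
so all inequalities are equalities: $U_{\vec u}$ also realises the $s$ smallest products of $\vec\beta\otimes\vec\kappa$ (hence obeys the crossing inequalities of (ii)), and $\sum_l\kappa_l(E_{u_l}(\vec\alpha)-p\,E_{u_l}(\vec\beta))=0$ with non-negative terms and $\kappa_l>0$ forces $E_{u_l}(\vec\alpha)=p\,E_{u_l}(\vec\beta)$ for each $l$, i.e.\ $u_l\in L\cup\{1,n,n+1\}$; the hypotheses $p<1$ and $p<\alpha_n/\beta_n$ discard $1$ and $n$, and $U_{\vec u}\neq\emptyset$ rules out all $u_l=n+1$, so $u_k\neq n+1$ — exactly the staircase the statement forbids. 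Conversely, given such a staircase $\vec r$ (thresholds in $L\cup\{n+1\}$, $r_1\geq\cdots\geq r_k\neq n+1$, crossing inequalities holding), fact (ii) gives $E_m((\beta\otimes\kappa)^{\downarrow})=\sum_l\kappa_l E_{r_l}(\vec\beta)$ for the corresponding $m$, while $E_m((\alpha\otimes\kappa)^{\downarrow})\leq\sum_l\kappa_l E_{r_l}(\vec\alpha)=p\sum_l\kappa_l E_{r_l}(\vec\beta)=p\,E_m((\beta\otimes\kappa)^{\downarrow})$; with the reverse inequality from the first step this is an equality, so $P_S(\vec\alpha\otimes\vec\kappa\rightarrow\vec\beta\otimes\vec\kappa)=p$ and $\vec\kappa$ fails.

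The step I expect to be the main obstacle is the bookkeeping hidden in facts (i) and (ii): picking a tie-breaking convention so that the $s$ smallest products of \emph{both} tensor products are genuinely read off from staircases, and verifying that the index-$(n+1)$ clause in~\eqref{eq_desigualdades} faithfully reproduces every boundary case ($r_l=n+1$, empty trailing columns; $r_l=1$, full columns, which is automatically excluded here because $1\notin L$). Once this is pinned down, the rest is just Theorem~\ref{thm-Vidal} and the displayed inequality chain.
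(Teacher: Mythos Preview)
The paper does not prove Theorem~\ref{theorem_feng}; it is quoted verbatim from Feng, Duan and Ying (Ref.~\cite{FDYIEEE}) and used as a black box, so there is no in-paper proof to compare against.

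That said, your outline is essentially the argument of the original source. You correctly identify the two reductions: first, that $P_S(\vec\alpha\otimes\vec\kappa\rightarrow\vec\beta\otimes\vec\kappa)\geq p$ always, so non-catalysis is equivalent to equality at some index $m$ of Vidal's formula for the enlarged pair; second, that the tail sum $E_m$ of a tensor product of non-increasing vectors can be written as $\sum_l\kappa_l E_{r_l}(\cdot)$ over a staircase $\vec r$, and that the negation of~\eqref{eq_desigualdades} is exactly the ``every element inside $\leq$ every element outside'' condition for that staircase to pick out the $s$ smallest products of $\vec\beta\otimes\vec\kappa$. The sandwich chain you display is the heart of the matter and forces each $u_l\in L\cup\{1,n,n+1\}$, with the hypotheses $p<1$ and $p<\alpha_n/\beta_n$ eliminating $1$ and $n$; the converse direction is immediate once (ii) is in hand. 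The only points that need more care than you have written are precisely the ones you flag: a tie-breaking rule guaranteeing that the bottom-$s$ index set is a genuine up-set for \emph{both} $\vec\alpha\otimes\vec\kappa$ and $\vec\beta\otimes\vec\kappa$, and checking that the ``index $n{+}1$ counts as violated'' clause matches the empty-column case on the nose. With those details filled in, your proposal is a complete proof along the same lines as~\cite{FDYIEEE}.
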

We should stress a couple of facts about the set $L$: first of all, note 
that it is a key ingredient for identifying catalysts for a given 
conversion, since in a certain sense it determines which indexes are 
really important for the comparison between 
$\vec{\beta}$ and $\vec{\kappa}$. Secondly observe that typically $L$ has 
only
one element $l$, \ie the minimum which determines the 
probability of 
conversion (see Thm. 2) is non-degenerate.

\section{Self-Catalysis under SLOCC}\label{sefslocc}

The phenomenon of self-catalysis can also take place when considering conversions under SLOCC.
Namely, if a conversion $\ket{\alpha}\rightarrow\ket{\beta}$ takes place with optimal probability
$0<p<1$, it can be the case that the optimal probability for
$\ket{\alpha}\otimes\ket{\alpha}\rightarrow\ket{\beta}\otimes\ket{\alpha}$ be $p' > p$.
Indeed, for the same Schmidt vectors $\vec{\alpha}$ and $\vec{\beta}$ given by Eq.~\eqref{exampleslocc}, there is a gain in the probability of conversion
if we use the state $\ket{\alpha}$ itself as a catalyst. Using Eq.~\eqref{formulavidal}, we have:
\begin{equation}
P_{S}(\ket{\alpha}\otimes\ket{\alpha}\rightarrow 
\ket{\beta}\otimes\ket{\alpha})\simeq
0.889>0.800 = P_{S}(\ket{\alpha}\rightarrow \ket{\beta}). 
\end{equation}

As it happens in the context of LOCC operations, the conversion between states can depend on the
number of attached copies of $\ket{\alpha}$. 
Table~\ref{table2} shows, for the example
we are considering (given by Eq.~\eqref{exampleslocc}), how the probability 
of 
conversion increases with the number of copies of $\ket{\alpha}$ to be used 
as catalyst.

This example may suggest that, by increasing the number of copies of $\ket{\alpha}$, the
probability of conversion approaches one. This is not always the case, however. Note that we
can bound from above the probability in Thm.~\ref{thm-Vidal} for the 
pair $\ket{\alpha}\otimes
\ket{\alpha}^{\otimes N}$ and $\ket{\beta}\otimes \ket{\alpha}^{\otimes N}$, 
since 
$P_{S}(\ket{\alpha}\otimes 
\ket{\alpha}^{\otimes
N}\rightarrow \ket{\beta}\otimes \ket{\alpha}^{\otimes N})\leq 
\alpha_{n}/\beta_{n},$
for all $N\geq 1$. Therefore, as long as 
$\alpha_{n}/\beta_{n}<1$, no matter how
many copies of $\ket{\alpha}$ we have, the probability of conversion will 
not exceed $\alpha_{n}/\beta_{n}$. The previous reasoning allows us to 
state the following Proposition:
\begin{proposition}
 Let $\vec{\alpha}$ and $\vec{\beta}$ be a pair of ordered Schmidt vectors 
with $n$ 
non-null components. If $\alpha_n / \beta_n < 1$, then 
\begin{equation}
 P_{S}(\ket{\alpha} \otimes \ket{\alpha}^{\otimes N} \rightarrow 
\ket{\beta} \otimes 
\ket{\alpha}^{\otimes N}) \leq \frac{\alpha_n}{\beta_n}, \,\, \forall \,\, 
N \geq 0.
\end{equation}
\end{proposition}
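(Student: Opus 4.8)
The plan is to apply Vidal's formula (Theorem~\ref{thm-Vidal}) to the pair $\ket{\alpha}\otimes\ket{\alpha}^{\otimes N}$ and $\ket{\beta}\otimes\ket{\alpha}^{\otimes N}$ and simply read off the term corresponding to the largest index. First I would record the elementary observation that for any probability vector $\vec{\lambda}$ with $M$ non-null components, $E_M(\vec{\lambda})=1-\sum_{l=1}^{M-1}\lambda_l=\lambda_M$, which is the smallest entry when $\vec{\lambda}$ is written in non-increasing order. Hence the minimum in Eq.~\eqref{formulavidal} is bounded above by the ratio of the smallest Schmidt coefficients of the two states, which for $N=0$ is exactly the familiar bound $P_S(\ket{\alpha}\to\ket{\beta})\le\alpha_n/\beta_n$.

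Second I would identify the Schmidt vectors of the two composite states: they are the non-increasing rearrangements of $\vec{\alpha}^{\otimes(N+1)}$ and $\vec{\beta}\otimes\vec{\alpha}^{\otimes N}$, respectively. Since $\vec{\alpha}$ and $\vec{\beta}$ each have exactly $n$ non-null components, both composite vectors have exactly $M=n^{N+1}$ non-null components, so Theorem~\ref{thm-Vidal} applies with common length $M$ (and with $\alpha_n,\beta_n>0$ guaranteed by hypothesis, so its premises are met). The smallest entry of $\vec{\alpha}^{\otimes(N+1)}$ is $\alpha_n^{N+1}$ and the smallest entry of $\vec{\beta}\otimes\vec{\alpha}^{\otimes N}$ is $\beta_n\,\alpha_n^{N}$, because every entry of a tensor product of non-increasingly ordered positive vectors is a product of one entry from each factor, and such a product is minimised by choosing the last (smallest) entry of each factor; reordering into non-increasing form does not change this multiset of entries.

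Combining the two steps, the $k=M$ term of Vidal's formula gives
\begin{equation}
P_{S}(\ket{\alpha}\otimes\ket{\alpha}^{\otimes N}\rightarrow\ket{\beta}\otimes\ket{\alpha}^{\otimes N})
\;\leq\; \frac{E_M(\vec{\alpha}^{\otimes(N+1)})}{E_M(\vec{\beta}\otimes\vec{\alpha}^{\otimes N})}
\;=\; \frac{\alpha_n^{N+1}}{\beta_n\,\alpha_n^{N}} \;=\; \frac{\alpha_n}{\beta_n},
\end{equation}
for every $N\geq 0$, which is the claim.

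I do not expect a genuine obstacle here; this is essentially the remark made in the paragraph preceding the Proposition, promoted to a formal statement. The only point deserving a line of care is the bookkeeping that the two composite Schmidt vectors really share the same number $n^{N+1}$ of non-null components — so that the version of Vidal's theorem quoted above (stated for vectors of equal length) applies — together with the correct identification of their minimal entries, both of which follow at once from the structure of tensor products of ordered positive vectors.
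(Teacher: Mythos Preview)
Your proof is correct and is exactly the approach the paper takes: the Proposition is stated as a corollary of ``the previous reasoning,'' namely bounding Vidal's minimum by the $k=M$ term, which is the ratio of smallest Schmidt coefficients of the two composite states. Your write-up is in fact more explicit than the paper's, spelling out why the smallest entries of the tensor products are $\alpha_n^{N+1}$ and $\beta_n\alpha_n^{N}$ and checking that both composite vectors have the same number $n^{N+1}$ of non-null components so that Theorem~\ref{thm-Vidal} applies.
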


For example, given
$\vec{\alpha}=(0.60,0.21,0.10,0.09)$ and 
$\vec{\beta}=(0.55,0.25,0.10,0.10)$, 
 we have a pair of
incomparable states with   
$P_{S}(\ket{\alpha}\rightarrow
\ket{\beta})=0.88$, but since $\alpha_{4} / \beta_{4} = 0.9$,
the probability of conversion under SLOCC using self-catalysis is limited by 
$0.9$ and indeed, for this case, $N=1$ is already optimal, since 
$P_{S}(\ket{\alpha} \otimes \ket{\alpha} \rightarrow \ket{\beta} \otimes 
\ket{\alpha})=0.9$. Analogously, 
for the pair $\vec{\alpha^{\prime}}=(0.40,0.34,0.15,0.11)$ and 
$\vec{\beta^{\prime}}=(0.50,0.21,0.17,0.12)$, Table~\ref{table3} shows 
the behavior of $P_{S}(\ket{\alpha^{\prime}} \otimes 
\ket{\alpha^{\prime}}^{\otimes 
N} \rightarrow 
\ket{\beta^{\prime}} \otimes 
\ket{\alpha^{\prime}}^{\otimes N})$ with the number $N$ of copies of 
$\ket{\alpha^{\prime}}$, and since $\alpha^{\prime}_{n}/\beta^{\prime}_{n} 
= 0.91667 < 1$ the probability of conversion may increase, but can not 
reach 1. 
Moreover, there is the case (see 
Table~\ref{table4}) where, for a given pair $\vec{\alpha}, \vec{\beta}$ of 
incomparable Schmidt 
vectors, the probability of conversion increases
monotonically 
with respect to $N$ and, for an $N _{0}< \infty $,
$P_{S}(\ket{\alpha} \otimes 
\ket{\alpha}^{\otimes N_{0}} \rightarrow \ket{\beta} \otimes 
\ket{\alpha}^{\otimes N_{0}})=1$ and, by Proposition \ref{conexao}, $\ket{\alpha} \otimes 
\ket{\alpha}^{\otimes N_{0}} \rightarrow 
\ket{\beta} \otimes \ket{\alpha}^{\otimes N_{0}}$.

\begin{table}
\begin{center}
\begin{tabular}{c | c}
\hline
$\sharp$ Copies (N) & $P_{S}(\ket{\alpha}\otimes\ket{\alpha}^{\otimes N} 
\rightarrow \ket{\beta}
\otimes 
\ket{\alpha}^{\otimes N})$  \\

\hline
 0&  $\simeq$ 0.800\\
\hline
 1&  $\simeq$ 0.889\\
\hline
  2&  $\simeq$ 0.907 \\
\hline
 3&  $\simeq$ 0.926 \\
\hline
  4&  $\simeq$ 0.932 \\
\hline
 5&  $\simeq$ 0.940\\
\hline
 6&  $\simeq$ 0.944\\
\hline
 7&  $\simeq$ 0.947\\
\hline
 8&  $\simeq$ 0.950\\
\hline
 9&  $\simeq$ 0.952\\
\hline
10&  $\simeq$ 0.955\\
\end{tabular}
\caption{Increase of optimal probability for the conversion
$\ket{\alpha}\rightarrow\ket{\beta}$, where $\vec{\alpha}=(0.6,0.2,0.2)$ and
$\vec{\beta}=(0.5,0.4,0.1)$, with the number of copies $N$ of $\ket{\alpha}$ 
used as a
catalyst.}\label{table2}
\end{center}
\end{table}

\begin{table}
\begin{center}
\begin{tabular}{c | c}
\hline
$\sharp$ Copies (N) & $P_{S}(\ket{\alpha^{\prime}} \otimes
\ket{\alpha^{\prime}}^{\otimes N} 
\rightarrow 
\ket{\beta^{\prime}} \otimes 
\ket{\alpha^{\prime}}^{\otimes N})$  \\

\hline
 0&  $\simeq$ 0.8965\\
\hline
 1&  $\simeq$ 0.9038\\
\hline
  2&  $\simeq$ 0.9072 \\
\hline
 3&  $\simeq$ 0.9092 \\
\hline
  4&  $\simeq$ 0.9105 \\
\hline
 5&  $\simeq$ 0.9109\\
\hline
 6&  $\simeq$ 0.9110\\
\end{tabular}
\caption{Increase of optimal probability for the conversion
$\ket{\alpha^{\prime}}\rightarrow\ket{\beta^{\prime}}$, where 
$\vec{\alpha^{\prime}}=(0.40,0.34,0.15,0.11)$ and
$\vec{\beta}=(0.50,0.21,0.17,0.12)$, with the number of copies $N$ of 
$\ket{\alpha^{\prime}}$ used as a
catalyst.}\label{table3}
\end{center}
\end{table}

\begin{table}
\begin{center}
\begin{tabular}{c | c}
\hline
$\sharp$ Copies (N) & $P_{S}(\ket{\alpha}\otimes\ket{\alpha}^{\otimes N} 
\rightarrow \ket{\beta}
\otimes 
\ket{\alpha}^{\otimes N})$  \\

\hline
 0&  $\simeq$ 0.600\\
\hline
 1&  $\simeq$ 0.818\\
\hline
  2&  $\simeq$ 0.911 \\
\hline
 3&  $\simeq$ 0.957 \\
\hline
  4&  $\simeq$ 0.981 \\
\hline
 5&  $\simeq$ 0.994\\
\hline
 6&  $=$ 1\\
\hline
\end{tabular}
\caption{Increase of optimal probability for the conversion
$\ket{\alpha}\rightarrow\ket{\beta}$, where 
$\vec{\alpha}=(0.928,0.060,0.006,0.006)$ and
$\vec{\beta}=(0.950,0.030,0.0195,0.0005)$, with the number of copies $N$ of 
$\ket{\alpha}$ used as a
catalyst.}\label{table4}
\end{center}
\end{table}

A particular case of Theorem~\ref{theorem_feng} allows us to obtain a 
necessary and sufficient
condition for having probabilistic self-catalysis for a single copy:
 
 \noindent
 \textbf{Criterion}\emph{
  Let $\vec{\alpha}$ and $\vec{\beta}$ be two $n$--dimensional Schmidt vectors with 
$P_{S}(\vec{\alpha} \rightarrow \vec{\beta}) 
 < \min\left\lbrace \frac{\alpha_n}{\beta_n},1 \right\rbrace$ and
 \begin{equation}
 L= \left\lbrace l; 1<l<n, \,\, \mbox{and} \,\, P_{S}(\vec{\alpha} 
\rightarrow \vec{\beta}) = 
\frac{E_{l}(\vec{\alpha})}{E_{l}(\vec{\beta})}\right\rbrace .
 \end{equation}
The vector $\vec{\alpha}$ serves as a probabilistic self-catalyst for the transformation from 
$\ket{\alpha}$ to $\ket{\beta}$ if, and  only if, for all $r_1, r_2, ..., r_n \in L \cup\{n+1\}$ 
satisfying $r_1 \geq r_2 \geq ... \geq r_n \neq n+1$, there exist $i$ and $j$, with $1 \leq j <  i 
\leq n$, such that
%
\begin{align}
 \frac{\alpha_{i}}{\alpha_{j}} &< \frac{\beta_{r_j}}{\beta_{r_{i}-1}} 
\label{ineq5} \\ 
&\mbox{\text{or}}  \nonumber \\ 
 \frac{\alpha_{i}}{\alpha_{j}} &> \frac{\beta_{r_{j}-1}}{\beta_{r_{i}}}.\label{ineq4}
\end{align}
Whenever  one of the inequalities~\eqref{ineq5} or~\eqref{ineq4} 
has an index $n+1$, it is considered to be
violated, so the other one must necessarily be satisfied.}

\begin{remark}
 Again, we observe that typically $L$ has only one element $l$, since the minimum which
determines the probability of conversion is non-degenerate with probability $1$.
\label{remark3}
\end{remark}

\subsection{{Self-Catalysis under SLOCC for random Schmidt vectors}}\label{tip2}

The criterion above, together with Proposition~\ref{boundlemma} and the 
behaviour of $L$, have interesting consequences for the probability of 
occurrence of self-catalysis. 

From Remark~\ref{remark2}, we know that the event
$[P_{S}(\ket{\alpha}\rightarrow\ket{\beta})<\text{min}\{\frac{\alpha_{n}}{\beta{n}},1\}]$ has
probability $\gtrsim 1/2$. Conditioning on this event we can then analyze the validity of
Ineqs.~\eqref{ineq5} and \eqref{ineq4}. With probability $1$ we must have 
$L=\{l\}$, for some $1<l<n$. Therefore, we can choose the indexes
$r_{i}$ in only two ways: either $r_{1}=r_{2}=...=r_{n}=l$ or $r_{1}=r_{2}=...=r_{j}=n+1$ and $r_{j+1}=...=r_{n}=l$, for some $j$.
For the second case, one can always satisfy one of the inequalities using the index $n+1$. For the
first case, the r.h.s of the inequalities always have index $l$ and we can lower bound the
probability for at least one of Inequalities \eqref{ineq4} to be valid:
\begin{align}
\mathbb{P}[\max_{1\leq j<i\leq n}\{\frac{\alpha_{i}}{\alpha_{j}}\}>\frac{\beta_{l-1}}{\beta_{l}}]\geq
\mathbb{P}[\max_{1<j\leq n}\{\frac{\alpha_{j-1}}{\alpha_{j}}\}>\max_{1<j\leq n}\{\frac{\beta_{j-1}}{\beta_{j}}\}]=1/2, 
\end{align}
using that $\max_{1\leq j<i\leq n}\{\frac{\alpha_{i}}{\alpha_{j}}\}=\max_{1< j\leq
n}\{\frac{\alpha_{j-1}}{\alpha_{j}}\}$, since $\vec{\alpha}$ is ordered, and the fact
that the two random variables on the second term are independent and identically distributed. Putting these together
we get that the probability for having SLOCC self-catalysis is $\gtrsim \frac{1}{4}.$

\begin{figure}[htbf]
\centering
\includegraphics[scale=0.5]{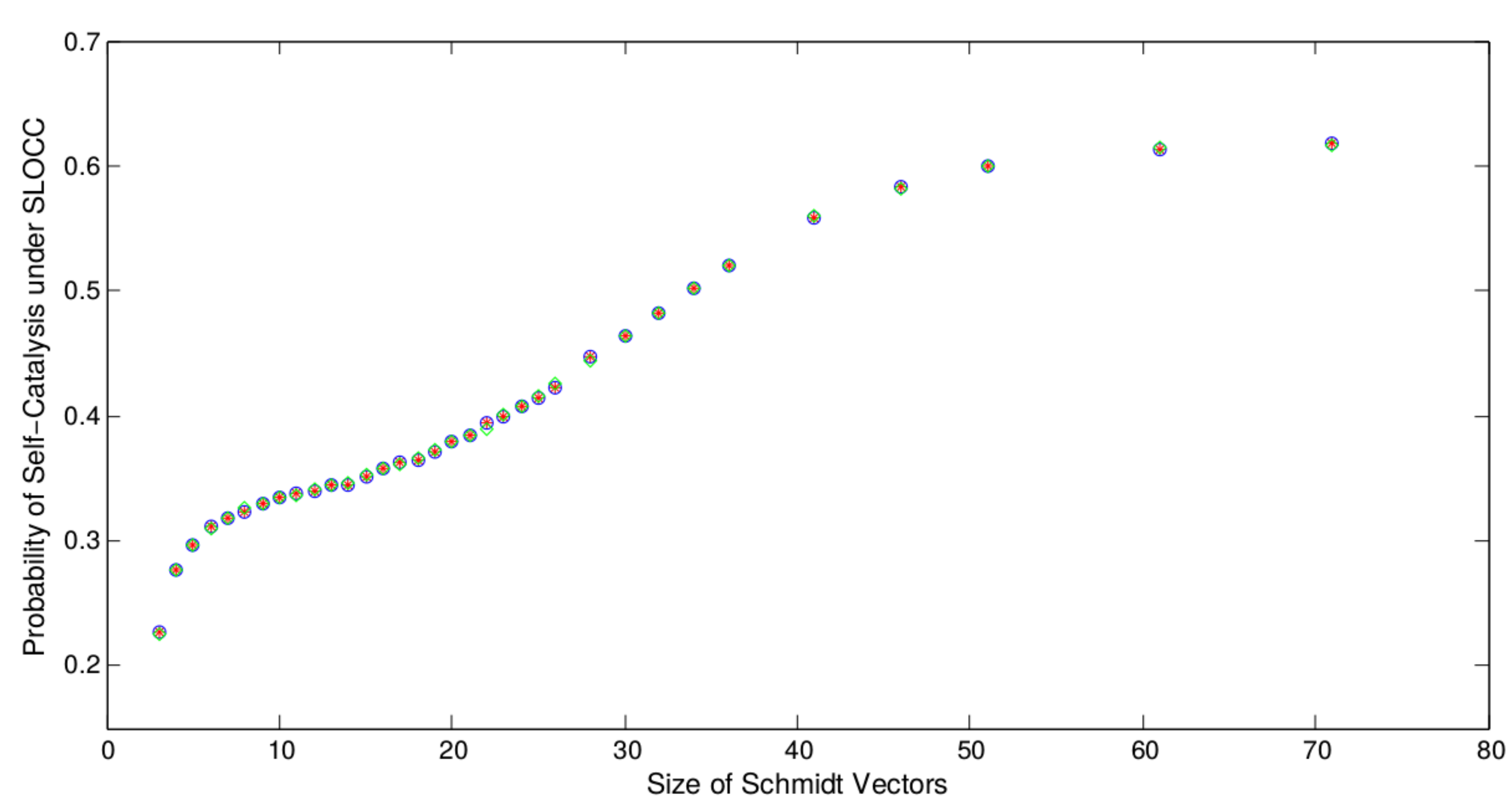}
\caption{Probability of finding a pair of states exhibiting self-catalysis 
under SLOCC, $\mathbb{P}[P_{S}(\ket{\alpha} \otimes \ket{\alpha} 
\rightarrow 
\ket{\beta} \otimes \ket{\alpha}) > \mathbb{P}(\ket{\alpha} \rightarrow 
\ket{\beta}) ]$, as function 
of the dimension of each system. Each symbol represents an average over a 
new set of randomly (Haar) chosen pairs. There are three symbols for each 
explored 
dimension.}
\label{cris3haar}
\end{figure}

We have also numerically investigated the typicality of probabilistic self-catalysis by 
1) sorting a pair of incomparable Schmidt vectors 
of same fixed dimension; 2) counting how many
of them show the effect; and 3) computing the average gain in probability. To 
be more specific, we
consider as a success case the situation where the pairs are such that
$p_{1}=P_{S}(\ket{\alpha}\rightarrow\ket{\beta})<P_{S}(\ket{\alpha}\otimes
\ket{\alpha}\rightarrow\ket{\beta}\otimes \ket{\alpha})=p_{2}$ and we compute the average value of
$p_{2}-p_{1}$. In order to avoid counting cases where $p_{1}<p_{2}$ due to numerical
fluctuations, we only consider as valid those vectors where $p_{2}>\left(1+10^{-5}\right)p_{1}$. 
The results are shown in Figures~\ref{cris3haar} and \ref{cris4haar}.

\begin{figure}[ht]
\centering
\includegraphics[scale=0.5]{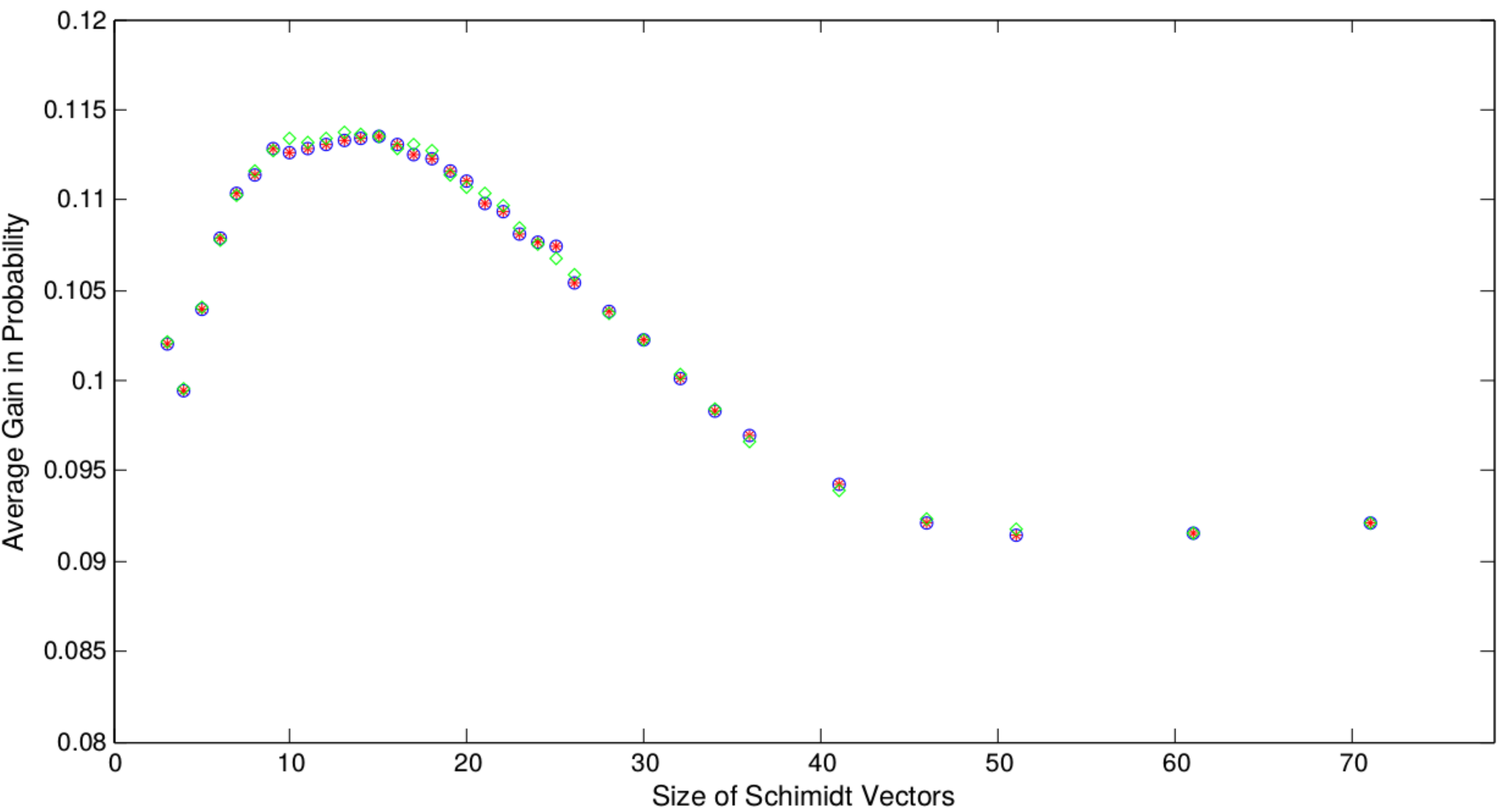}
\caption{Average gain of probability,  $\mathbb{E}[P_{S}(\ket{\alpha} 
\otimes \ket{\alpha} \rightarrow \ket{\beta} \otimes 
\ket{\alpha})-P_{S}(\ket{\alpha} \rightarrow \ket{\beta})]$, as a function 
of the size of Schmidt vectors (sampled following Haar measure), considered 
only those pairs where 
self-catalysis occurs, \ie $P_{S}(\ket{\alpha} 
\otimes \ket{\alpha} \rightarrow \ket{\beta} \otimes 
\ket{\alpha}) > P_{S}(\ket{\alpha} \rightarrow \ket{\beta})$.}
\label{cris4haar}
\end{figure}
First note that they are consisted with the lower bound of $1/4$ estimated before. Comparing with the
deterministic case, probabilistic self-catalysis is much more frequent, as expected. Even more, here we \emph{do not} have
the same qualitative behaviour: the probability of having a pair of incomparable states exhibiting
self-catalysis increases monotonically with the size of the Schmidt 
vectors 
and seems to be
converging to a  value around $0.6$. 
Meanwhile, this is not the behaviour of the average probability gain, shown in  
Figure~\ref{cris4haar}. 
The average gain in probability is
relatively small for all system sizes, and decreases even more for larger 
sizes. But note that we consider only one copy of $\ket{\alpha}$
attached, and while for many copies the average gain must be 
larger. It is important to recall Fig.~\ref{Fig:ConvRatehaar}, however, which tell
us that a pair $\left( \vec{\alpha}, \vec{\beta}\right)$ has, on average, a probability 
of direct conversion close to $0.6$, which naturally bounds the catalytic gain to about $0.4$.

Finally,  Figure~\ref{Fig:Dpagainstphaar} represents, for 
randomly chosen pairs of 
incomparable Schmidt vectors with size $n=45$, the self-catalytic 
probability gain \emph{versus} direct conversion rates. 
The diagonal straight line just represents saturation of probability.
Some concentration close to the horizontal axis is natural, representing the cases where self-catalysis does not happen.
However it is not clear why there is the bold concentrated 
cloud in red, where the majority of the pairs fit. 
It empirically means that the most typical situation for a pair of Schmidt vectors of the same size is to have a large probability of conversion and to have a considerable (but not maximal) self-catalytical gain.

\begin{figure}[h]
\centering
\includegraphics[scale=0.5]{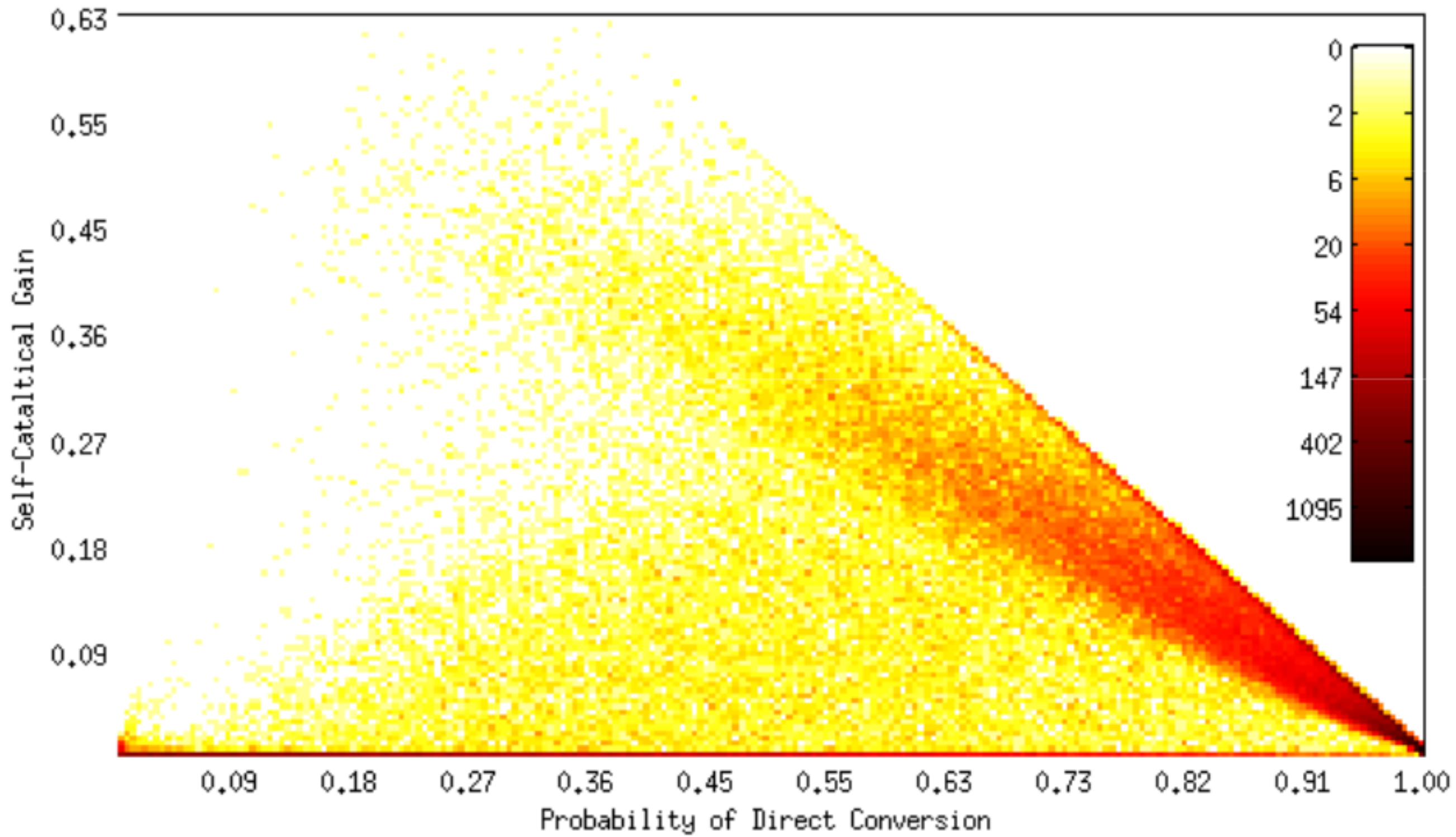}
\caption{Self-catalytic probability gain and direct conversion rates for 
randomly (Haar) chosen pairs of 
incomparable Schmidt vectors with size $n=45$. The colour represents the
number of pairs per pixel.}
\label{Fig:Dpagainstphaar}
\end{figure}
%
%

\section{Conclusions}\label{concl}

In this paper we have shown the possibility of self-catalytic entanglement 
conversion for
both LOCC and SLOCC scenarios by providing explicit examples of them. 
We have explored numerically and by arguments of typicality  how frequent they are by showing that it is much more 
common in SLOCC case
than the deterministic one, but despite the fact that self-catalysis 
under SLOCC become more common as the systems sizes increase, the direct 
self-catalysis decreases with the dimension. 
Moreover, we also investigated how the phenomenon may depend on the number of copies used as catalyst, finding examples of different behaviours, running from cases where there is no gain in considering multiple copies, to cases where the conversion becomes deterministic for a finite number  of copies.
Since we rest on numerical techniques, we could not guarantee the existence of a transition where the probability of conversion asymptotically goes to $1$. 
Finally, we computed the average gain in probability in the SLOCC case, showing that
this gain, as in the LOCC case, has a global maximum, and also decreases 
with larger systems sizes.

{In answering the question about existence of self-catalysis, we obtained 
many results, not
only on deterministic and probabilistic self-catalysis, but also on ordinary 
catalysis.
About the typicality of self-catalysis, our data support two conjectures: 
under LOCC, the
probability of finding a self-catalytic reaction increases monotonically 
attaining a global maximum for a dimension about $20$. 
We believe that the origin of this 
overall non-monotonic behaviour is due the competition between the 
\emph{sizes} of each Schmidt vectors and the measure \emph{concentration} 
phenomenon}. On the other hand, the data suggests that under SLOCC the probability of finding a self-catalytic reaction increases 
monotonically with the dimension. 

In a sense, we estimated numerically the volumes of the sets of pairs of 
Schmidt vectors where the
phenomena take place, but it was not possible to characterize completely the asymptotic behavior
with the vectors sizes. 
{In fact, some of our numerical results have a reasonable dependence on the way  we sort 
the random Schmidt vectors}.
That is something to be explored elsewhere. 
Perhaps a better understanding of 
the geometry of the sets involved could
help in that analysis.



\begin{acknowledgments}
We would like to thank Tha\'is Matos Ac\'acio, Jos\'e Am\^ancio,  Luiz Fernando Faria, Eduardo Mascarenhas,  Fernando de Melo, Roberto Oliveira,  and the anonymous referee for useful comments and stimulating discussions, as well as CNPq, CAPES and FAPEMIG for financial support.
\end{acknowledgments}



\end{document}